\newcommand*\bigcdot{\mathpalette\bigcdot@{.5}}
\newcommand*\bigcdot@[2]{\mathbin{\vcenter{\hbox{\scalebox{#2}{$\m@th#1\bullet$}}}}}
\newcommand{\is}{\bigcdot }
\newcommand{\E}{\mathbb E}
\def \Rbrack {[\![}
\def \Lbrack {]\!]}
\def \Rbrack {[\![}
\def \Lbrack {]\!]}
\title{Log-optimal and num\'eraire portfolios for market models stopped at a random time}
\author{ Tahir Choulli \and Sina Yansori}
\institute{Tahir Choulli (corresponding author) and Sina Yansori\at
Department of Mathematical and Statistical Sciences,\\ University of Alberta, Edmonton, Canada \\
              \email{tchoulli@ualberta.ca}  }
\date{Version of July 24th, 2020}
\begin{document}
\titlerunning{Log-related portfolios  under random horizon}
\maketitle






\begin{abstract}
This paper focuses on num\'eraire portfolio and log-optimal portfolio (portfolio with finite expected utility that maximizes the expected logarithm utility from terminal wealth), when a market model $(S,\mathbb F)$ --specified by its assets' price $S$ and its flow of information $\mathbb F$-- is stopped at a random time $\tau$. This setting covers the areas of credit risk and life insurance, where $\tau$ represents the default time and the death time respectively. Thus, the progressive enlargement of $\mathbb F$ with $\tau$, denoted by $\mathbb G$, sounds tailor-fit for modelling the new flow of information that incorporates both $\mathbb F$ and $\tau$. For the resulting stopped model $(S^{\tau},\mathbb G)$, we study the two portfolios in different manners, and describe their computations in terms of the $\mathbb F$-observable parameters of the pair $(S, \tau)$. 
As existence of num\'eraire portfolios for $(S, \mathbb F)$ and $(S^{\tau},\mathbb G)$ is well understood due to \cite{ACDJ1,ACDJ3,CD1,ChoulliDengMa,KardarasKaratzas}, herein we characterize num\'eraire portfolio of $(S^{\tau},\mathbb G)$ in various manners and we single out the types of risks borne by $\tau$ that really affect the portfolio. In contrast to num\'eraire portfolio, both existence and characterization of log-optimal portfolio of $(S^{\tau},\mathbb G)$ pose serious challenges. Among these we mention the following. a) What are the conditions on $\tau$ (preferably in terms of information theoretic concepts) that fully characterize the existence of log-optimal portfolio of $(S^{\tau},\mathbb G)$ when that of $(S,\mathbb F)$ already exists? b) What are the factors that fully determine {\it the increment in maximum expected logarithmic utility from terminal wealth} for the two models $(S^{\tau},\mathbb G)$  and $(S,\mathbb F)$, and how to quantify them?  This problem rises naturally, given that the investor endowed with the flow $\mathbb G$ possesses an informational advantage, as she can see $\tau$ occurring, but also faces various risks. Besides answering deeply these problems and other related challenges, our paper proposes in its details another modelling method for random times and shows their positive and negative impacts on log-optimal portfolio.  \end{abstract}





\section{Introduction}
This paper addresses two portfolios that intimately related to the logarithmic utility. These portfolios, are known in the literature as num\'eraire  and log-optimal portfolios, that we start by defining below for the sake of full precision. To this end, we denote by $W^{\theta}$ the wealth process of the portfolio $\theta$.
\begin{definition}\label{NP/LogOP}  Let $(X, \mathbb H, Q)$ be a market model, where $X$ is the assets' price process, $\mathbb H$ is a filtration, and $Q$ is a probability measure. Consider a fixed investment horizon $T\in(0,+\infty)$, and a portfolio $\theta^*$.\\
{\rm{(a)}} $\theta^*$ is {\it num\'eraire portfolio} for $(X, \mathbb H, Q)$ if  $W^{\theta^*}>0$ and
\begin{eqnarray}\label{NP}
{{W^{\theta}}\over{W^{\theta^*}}}\ \mbox{is a supermartingale under $(\mathbb H, Q)$, any portfolio $\theta$ with $W^{\theta}\geq 0$}.\hskip 0.5cm
\end{eqnarray}
{\rm{(b)}} $\theta^*$ is called {\it log-optimal portfolio} for $(X, \mathbb H, Q)$ if $\theta^*\in \Theta(X,\mathbb H, Q)$ and 
\begin{eqnarray}
 u_T(X,\mathbb H, Q):=\sup_{\theta\in\Theta}E_Q\left[\ln(W^{\theta}_T)\right]= E_Q\left[\ln(W^{\theta^*}_T)\right],\label{LogInfinite}\end{eqnarray}
where $E_Q[.]$ is the expectation under $Q$, and $\Theta:=\Theta(X,\mathbb H, Q)$ is given by 
\begin{eqnarray}\label{AdmissibleSet0}
\hskip -0.6cm \Theta(X,\mathbb H, Q):=\left\{\mbox{portfolio}\ \theta\ \Big|\ W^{\theta}> 0\quad \mbox{and}\quad E_Q\left[\vert \ln(W^{\theta}_T)\vert \right]<+\infty\right\}.\end{eqnarray}
\end{definition}
The problem of maximization of expected logarithm-utility from terminal wealth, defined in (\ref{LogInfinite})-(\ref{AdmissibleSet0}), received a lot of attention in the literature, even though it is a particular case of the utility maximization theory problem. This latter problem is addressed at various levels of generality, and for further details about it we refer to  \cite{CSW,Karatzas,KW99,KZ,merton71,merton73} and the references therein. \\
The num\'eraire portfolio was introduced --up to our knowledge-- in \cite{Long}, where $W^{\theta}/W^{\theta^*}$ is required to be a martingale, while  Definition \ref{NP/LogOP}-(a) goes back to \cite[Definition 4.1]{Becherer}, who remarked that the martingality requirement for $W^{\theta}/W^{\theta^*}$  is too stringent to obtain a general existence result. Then these works was extended and investigated extensively in different directions in \cite{Becherer,ChoulliDengMa,ChristensenLarsen2007,HulleySchweizer,KardarasKaratzas} and the references therein. In \cite{Becherer,ChristensenLarsen2007,HulleySchweizer,GollKallsen}, it was proved that under no-free-lunch-with-vanishing-risk assumption (NFLVR hereafter) and/or $\sup_{\theta\in\Theta}E\left[\ln(W^{\theta}_T)\right]<+\infty$, the two portfolios (log-optimal and num\'eraire) coincide. Using the change of probability technique, deep and precise connection between the two portfolios is established in \cite{ChoulliDengMa} under no assumption. Furthermore, very recently in \cite{ChoulliYansori2} and under no assumption at all, this connection is elaborated without changing the probability and the explicit computation of log-optimal portfolio and other related properties were also developed. This latter work, for which we also refer for more detailed discussions about the literature on these portfolios, definitely seals these questions about the two portfolios, and is vital for our current paper as it assumes no assumption.
  \subsection{What are our objectives and what does the literature say about them?} 
  In this paper, we consider an initial market model  represented by the pair $(S,\mathbb F)$, where $S$ represents the discounted stock prices for $d$-stocks, and $\mathbb F$ is the ``public" information that is available to all agents. To this initial market model, we add a random time $\tau$ that might not be seen through $\mathbb F$ when it occurs (mathematically speaking $\tau$ might not be an $\mathbb F$-stopping time). In this context, we adopt the progressive enlargement of filtration to model the larger information that includes both $\mathbb F$ and $\tau$. The obtained new informational system,  that we denote by $(S^{\tau},\mathbb G)$, allows us to keep in mind credit risk theory and life insurance as potential applications of our results, besides the general financial setting of markets with random horizons. For this informational market, our ultimate goal lies in measuring the impact of $\tau$ on num\'eraire and log-optimal portfolios,  no matter what is the model for $(S,\mathbb F)$ and no matter how it is related to $\tau$ that is an arbitrary random time with positive ``survival probability" (i.e. Az\'ema supermartingale). Our setting falls into the vague topic of {\it portfolio problem under asymmetries of information}. The mathematical literature on information modelling proposes only two cases of incorporating the extra information depending on whether this information is added at the beginning of the investment interval or progressively over time.
   
  The first case corresponds mathematically  to the initial enlargement of filtration and is known in the finance and mathematical finance literatures as {\it the insider trading} setting. For this insider framework, log-optimal portfolio  is extensively studied and we refer the reader to \cite{amendingerimkellerschweizer98,ADImkeller,AImkeller,JImkellerKN,GrorudPontier,pikovskykaratzas96,kohatsusulem06}  and the references therein to cite few. Most of this literature focuses on two intimately related questions on log-optimal portfolio for the model  $(S,{\mathbb G}^*)$, where ${\mathbb G}^*$ is the initial enlargement of $\mathbb F$ with a random variable $L$ that represents the extra knowledge. In fact, under some assumption on the pair $(L, \mathbb F)$, frequently called Jacod's assumption, the existence of log-optimal portfolio and the evaluation of  the {\bf  increment of expected logarithm-utility from terminal wealth} (denoted hereafter by IEU$_{log}(S,{\mathbb G}^*, \mathbb F)$) for both models $(S,{\mathbb G}^*)$ and $(S,\mathbb F)$ represent the core contribution of these papers, where it is proven that  
\begin{eqnarray}\label{InsiderFormula}
\mbox{IEU}_{log}(S,{\mathbb G}^*, \mathbb F):=u_T(S, {\mathbb G}^*)-u_T(S, \mathbb F)=\mbox{relative entropy}(P\big| Q^*).
\end{eqnarray}
Hence, in this insider setting, log-optimal portfolio for $(S,{\mathbb G}^*)$  exists if and only if $P$ has a finite entropy with respect to $Q^*$ , a precise probability measure associated to $L$  that is explicitly described. In particular, the quantity  $\mbox{IEU}_{log}(S,{\mathbb G}^*, \mathbb F)$ is always a true gain due to the advantage of knowing fully $L$ by the investor endowed with the flow $\mathbb G^*$. The formula (\ref{InsiderFormula}) was initially derived in \cite{pikovskykaratzas96} for the Brownian filtration, and it was extended to models driven by general continuous local martingales in \cite{amendingerimkellerschweizer98}, where the authors connect this formula with Shannon entropy of $L$ for some models. The Shannon concept was exploited deeply  in \cite{ADImkeller} afterwards, where the authors show its important role in measuring the impact of inside-information on log-optimal portfolio.
  
The second case of information modelling, which suggests to add the extra information over time as it occurs, leads to the progressive enlargement filtration, and is tailor-fit to our current financial setting in contrast to the initial enlargement. Our economic and financial problem that deals with how a random horizon will impact an investment (in particular num\'eraire and log-optimal portfolios) can be traced back to Fisher \cite{fisher1931}. Since then, the problem has been addressed in the economic literature by focusing on discrete market models and the impact of the distribution of $\tau$ only, see \cite{Hakansson,Yaari1965} and the references therein. Thus, our paper seems to be the {\it first of its kind} in virtue of the general setting for $(S, \mathbb F, \tau)$ and both the qualitative and quantitative results obtained. Below, we highlight the intuitive ideas that leaded to these results.

Given that a log-optimal portfolio is a num\'eraire portfolio, see \cite{ChoulliYansori2} and the references therein, we start by addressing num\'eraire portfolio of $(S^{\tau},\mathbb G)$.  Thanks to \cite{ChoulliDengMa,KardarasKaratzas} that connects the existence of num\'eraire portfolio to the concept of No-Unbounded-Profit-with-bounded-risk (NUPBR hereafter), and the recent works \cite{ACDJ1,ACDJ3,CD1} on NUPBR for the stopped model $(S^{\tau},\mathbb G)$, the problem of existence of num\'eraire portfolio for $(S^{\tau},\mathbb G)$ is completely understood. Thus, herein, we focus on describing this num\'eraire portfolio in terms of the $\mathbb F$-observable data and processes, and mainly single out 
\begin{eqnarray}\label{Q1}
\mbox{which types of risks borne by $\tau$ that affect num\'eraire portfolio.}\hskip 0.65cm\end{eqnarray}
For log-optimal portfolio of $(S^{\tau}, \mathbb G)$, the situation is more challenging, and the problem of its existence is the first obstacle. To address this, we appeal to the explicit description of the set of deflators for $(S^{\tau}, \mathbb G)$, recently developed together with its application to NFLVR in \cite{ChoulliYansori1}, and answer the following. 
\begin{eqnarray}\label{Q2}
\mbox{For which models of $(S, \tau)$, log-optimal portfolio of $(S^{\tau}, \mathbb G)$ exists?}\end{eqnarray}
It is worth mentioning that this existence question is much deeper and general than the corresponding one addressed in the insider setting. Indeed, in our framework, there is no hope for (\ref{InsiderFormula}) to hold in its current form, and only a practical answer to (\ref{Q2}) will allow us to answer the question below.
\begin{eqnarray}\label{Q3}
&&\mbox{What {\it informational condition} on $\tau$  that characterizes the existence }\nonumber\\ 
&&\mbox{ of log-optimal portfolio for $(S^{\tau}, \mathbb G)$ if $(S,\mathbb F)$ has log-optimal portfolio?}\hskip 0.75cm\end{eqnarray}
For our case of random horizon, the {\it increment of expected logarithm-utility} between $(S^{\tau}, \mathbb G)$ and $(S,\mathbb F)$, that we denote by IEU$_{log}(S, \tau, \mathbb F)$, is defined by 
\begin{eqnarray}\label{Delta(S, Tau)}
\mbox{IEU}_{log}(S,\tau, \mathbb F):=\Delta_T(S, \tau, \mathbb F):=u_T(S^{\tau}, \mathbb G)-u_T(S, \mathbb F),\end{eqnarray}
and is affected by many factors, and hence we address the question of 
\begin{eqnarray}\label{Q5}
\mbox{which factors that explain how sensitive IEU$_{log}(S,\tau,\mathbb F)$ to $\tau$}?\end{eqnarray}
To answer this question, we prefer study the much deeper question below, that deals with the explicit computation of log-optimal portfolio.
\begin{eqnarray}\label{Q4}
\mbox{How log-optimal portfolio of $(S^{\tau},\mathbb G)$ can be described using $\mathbb F$ only?}\hskip 0.75cm\end{eqnarray}

\subsection{What are our achievements?} Our mathematical and financial achievements are numerous and highly novel in both conceptual and methodological aspects. In fact, we answer all the aforementioned questions above (i.e. (\ref{Q1}), (\ref {Q2}),(\ref{Q3}), (\ref{Q5}), (\ref{Q4})) and other related problems in a very detailed and deep analysis and in various manners. In fact, we describe log-optimal portfolio, the structures of its associated log-optimal deflator, and num\'eraire portfolio for $(S^{\tau},\mathbb G)$ in different manners. As a result, we prove that the random horizon induces randomness in agent's preference (or the agent's impatience as called in Fisher \cite{fisher1931}). This connects the random horizon issue to random utilities that appeared in economics within the {\it random utility model theory} due to the psychometric literature that gave empirical evidence for stochastic choice behaviour. For details about this theme, we refer to \cite{Clark96,Cohen80,Mcfadden90,Suppes89} and for applications we refer to \cite{ChoulliMaMorlais,ChoulliMa,Ma,MusielaZariphoupoulou,KZ}.\\
Our results show that both portfolios (num\'eraire and log-optimal) are affected by the correlation between $S$ and $\tau$ only, and this correlation is explicitly parametrized using $\mathbb F$-adapted processes. We prove that $\Delta_T(S,\tau,\mathbb F)$, defined in (\ref{Delta(S, Tau)}), depends on four factors. These factors, that we quantify explicitly, are ``the cost-of-leaving-earlier", ``the information-premium", which is due to the advantage of knowing the occurring of $\tau$ when it happens, ``the correlation" between num\'eraire portfolio of $(S,\mathbb F)$ and $\tau$, and ``the correlation-risk" between $\tau$ and $S$. These factors explain how complex is the impact of a random horizon on portfolio compared to the impact of an {\it inside-information}. 

This paper contains five sections including the current one. Section  \ref{section2} presents the mathematical and the financial model besides the corresponding required notation and some preliminaries that state some existing results that are important herein. Section \ref{section3} addresses num\'eraire portfolio for $(S^{\tau},\mathbb G)$, while Section \ref{section4} focuses on the existence of log-optimal portfolio and the duality. Section \ref{section5} describes explicitly both num\'eraire and log-optimal portfolios using the $\mathbb F$-predictable characteristics of the model, and discusses its financial applications and consequences. The paper contains an appendix where some proofs are relegated and  some technical (new and existing) results are detailed. 
\section{The mathematical model and preliminaries}\label{section2}
Throughout the paper, by  ${\mathbb H}$ we denote an arbitrary  filtration that satisfies the usual conditions of completeness and right continuity.  For any process $X$, the $\mathbb H$-optional projection and dual $\mathbb H$-optional projection of $X$, when they exist, will be denoted by $^{o,\mathbb H}X$ and $X^{o,\mathbb H}$ respectively.  Similarly, we denote by $^{p,\mathbb H}X$ and $X^{p,\mathbb H}$ the $\mathbb H$-predictable projection and dual predictable projection of $X$ when they exist. The set ${\cal M}(\mathbb H, Q)$ denotes the set of all $\mathbb H$-martingales under $Q$, while ${\cal A}(\mathbb H, Q)$ denotes the set of all  optional processes with integrable variation under $Q$. When there is  no risk of confusion, we simply omit the probability for the sake of simplifying notation.  For an $\mathbb H$-semimartingale $X$, by $L(X,\mathbb H)$ we denote the set of $\mathbb H$-predictable processes that are $X$-integrable in the semimartingale sense.  For $\varphi\in L(X,\mathbb H)$, the resulting integral of $\varphi$ with respect to $X$ is denoted by $\varphi\is X$. For $\mathbb H$-local martingale $M$, we denote by $L^1_{loc}(M,\mathbb H)$ the set of $\mathbb H$-predictable processes $\varphi$ that are $X$-integrable and the resulting integral $\varphi\is M$ is an $\mathbb H$-local martingale. If ${\cal C}(\mathbb H)$ is the set of processes that are adapted to $\mathbb H$, then ${\cal C}_{loc}(\mathbb H)$ is the set of processes, $X$, for which there exists a sequence of $\mathbb H$-stopping times, $(T_n)_{n\geq 1}$, that increases to infinity and $X^{T_n}$ belongs to ${\cal C}(\mathbb H)$, for each $n\geq 1$. For any $\mathbb H$-semimartinagle, $L$, we denote by ${\cal E}(L)$ the Doleans-Dade (stochastic) exponential, it is the unique solution to the stochastic differential equation $dX=X_{-}dL,\quad X_0=1,$  given by
$$ {\cal E}_t(L)=\exp(L_t-{1\over{2}}\langle L^c\rangle_t)\prod_{0<s\leq t}(1+\Delta L_s)e^{-\Delta L_s}.$$
{\bf How our financial model is parametrized?} Our model starts with a filtered probability space $\left(\Omega, {\cal F}, \mathbb F,P\right)$. Here the filtration $\mathbb F:=({\cal F}_t)_{t\geq 0}$, which represents the ``public" flow of information available to all agent over time, satisfies the usual conditions of right continuity and completeness. On this stochastic basis, we suppose given a d-dimensional $\mathbb F$-semimartingale, $S$, that models the discounted price process of d risky assets. In addition to this initial market model $(S, \mathbb F)$, we consider a random time $\tau$, that might represent the death time of an agent or the default time of a firm, and hence it might not be an $\mathbb F$-stopping time in general. To this random time, we associate the non-decreasing process $D$ and the filtration $\mathbb G:=({\cal G}_t)_{t\geq 0}$ given by
\begin{equation}\label{processD}
D:=I_{\Rbrack\tau,+\infty\Rbrack},\ \ \ \ {\cal G}_t:={\cal G}^0_{t+}\ \ \mbox{where} \ \ {\cal G}_t^0:={\cal F}_t\vee\sigma\left(D_s,\ s\leq t\right).
\end{equation}
It is clear that $\mathbb G$ makes $\tau$ a stopping time. In fact, it is the smallest filtration, satisfying the usual conditions, that makes $\tau$ a stopping time and contains $\mathbb F$. It is the progressive enlargement of $\mathbb F$ with $\tau$.  Besides $D$ and $\mathbb G$, other $\mathbb F$-adapted processes intimately related to $\tau$ play central roles in our analysis. Among these, the following survival probabilities, also called Az\'ema supermartingales in the literature, and are given by 
\begin{eqnarray}\label{GGtilde}
G_t :=^{o,\mathbb F}(I_{\Rbrack0,\tau\Rbrack})_t= P(\tau > t | {\cal F}_t) \ \mbox{ and } \ \widetilde{G}_t :=^{o,\mathbb F}(I_{\Rbrack0,\tau\Lbrack})_t= P(\tau \ge t | {\cal F}_t),\hskip 0.5cm\end{eqnarray}
while the process
\begin{equation} \label{processm}
m := G + D^{o,\mathbb F},
\end{equation}
is an $\mathbb F$-martingale.  Then  thanks to  \cite{ACJ} and  \cite{ChoulliDavelooseVanmaele}, we claim the following.
\begin{theorem}
The following assertions hold.\\
{\rm{(a)}} For any  $M\in{\cal M}_{loc}(\mathbb F)$, the process
\begin{equation} \label{processMhat}
{\cal T}(M) := M^\tau -{\widetilde{G}}^{-1} I_{\Lbrack 0,\tau\Lbrack} \is [M,m] +  I_{\Lbrack 0,\tau\Lbrack} \is\Big(\sum \Delta M I_{\{\widetilde G=0<G_{-}\}}\Big)^{p,\mathbb F},\end{equation}
is a $\mathbb G$-local martingale.\\
 {\rm{(b)}}We always have 
\begin{equation} \label{processNG}
N^{\mathbb G}:=D - \widetilde{G}^{-1} I_{\Lbrack 0,\tau\Lbrack} \is D^{o,\mathbb  F}\in{\cal M}(\mathbb G)\cap{\cal A}(\mathbb G),
\end{equation}
and $H\is N^{\mathbb G}\in {\cal M}_{loc}(\mathbb G)\cap{\cal A}_{loc}(\mathbb G)$ for any $H$ belonging to
\begin{equation} \label{SpaceLNG}
{\mathcal{I}}^o_{loc}(N^{\mathbb G},\mathbb G) := \Big\{K\in \mathcal{O}(\mathbb F)\ \ \big|\quad \vert{K}\vert G{\widetilde G}^{-1} I_{\{\widetilde{G}>0\}}\is D\in{\cal A}^+_{loc}(\mathbb G)\Big\}.
\end{equation}
\end{theorem}
For $p\in [1,+\infty)$ and a $\sigma$-algebra ${\cal H}$ on $\Omega\times [0,+\infty[$,  we define 
$L^p_{loc}\left({\cal H}, P\otimes D\right)$ as the set of all processes $X$  for which there exists a sequence of $\mathbb F$-stopping times $(T_n)_{n\geq 1}$ that increases to infinity almost surely and $X^{T_n}$ belongs to $L^p\left({\cal H}, P\otimes D\right)$ given by 
\begin{equation}\label{L1(PandD)Local}
L^p\left({\cal H}, P\otimes D\right):=\left\{ X\ {\cal H}\mbox{-measurable}\big|\ \E[\vert X_{\tau}\vert^p I_{\{\tau<+\infty\}}]<+\infty\right\}.\end{equation}

The explicit description of the set of all deflators for $(S^{\tau},\mathbb G)$ is vital  for our analysis of log-optimal and num\'eraire portfolios undertaken in the coming sections. Thus, we start by  recalling the mathematical definition of delators.
\begin{definition}\label{DeflatorDefinition} Let $X$ be an $\mathbb H$-semimartingale and $Z$ be a process.\\
  We call $Z$ a deflator for $(X,\mathbb H)$ if $Z>0$ and $Z{\cal E}(\varphi\is X)$ is an $\mathbb H$-supermartingale, for any $\varphi\in L(X, \mathbb H)$ such that $\varphi\Delta X\geq -1$. \\ The set of all deflators for $(X,\mathbb H)$ will be denoted by ${\cal D}(X,\mathbb H)$.
 \end{definition}
Throughout the paper, the following subset of ${\cal D}(X,\mathbb H)$ will be very useful 
\begin{eqnarray}
{\cal D}_{log}(X,\mathbb H)&&:=\Bigl\{Z\in {\cal D}(X,\mathbb H)\ \big| E[-\ln(Z_T)]<+\infty\Bigr\}.\label{DeflatorsLOG}
\end{eqnarray}
The following is borrowed from \cite{ChoulliYansori1}, and parametrizes explicitly ${\cal D}(S^{\tau}, \mathbb G)$. 
\begin{theorem}\label{GeneralDeflators} 
Suppose $G > 0$, and let ${\cal T}(\cdot)$ be the operator defined in (\ref{processMhat}). Then the following assertions hold.\\
{\rm{(a)}} $Z^{\mathbb G}$ is a deflator for $(S^{\tau}, \mathbb G)$  (i.e. $Z^{\mathbb G}\in {\cal D}(S^{\tau}, \mathbb G)$) if and only if there exists unique $\left(Z^{\mathbb F}, \varphi^{(o)}, \varphi^{(pr)}\right)$ such that $Z^{\mathbb F}\in{\cal D}(S, \mathbb F)$, $(\varphi^{(o)},\varphi^{(pr)})$ belongs to ${\cal I}^o_{loc}(N^{\mathbb G},\mathbb G)\times L^1_{loc}({\rm{Prog}}(\mathbb F),P\otimes D)$,  
  \begin{eqnarray}
&&\varphi^{(pr)}>-1,\quad 
-{\widetilde G}/ G<\varphi^{(o)},\ \varphi^{(o)}(\widetilde G -G)<{\widetilde G},\ P\otimes D\mbox{-a.e.,} \label{ineqMultiGeneral1}\\
&&\mbox{and}\quad Z^{\mathbb G}={{(Z^{\mathbb F})^{\tau}}\over{{\cal E}(G_{-}^{-1}\is m)^{\tau}}}{\cal E}(\varphi^{(o)}\is N^{\mathbb G}){\cal E}(\varphi^{(pr)}\is D).\label{repKGMultiGEneral}\end{eqnarray}
{\rm{(b)}}  For any $K\in {\cal M}_{0,loc}(\mathbb F)$, we always have
\begin{eqnarray*}
{{{\cal E}(K)^{\tau}/{\cal E}(G_{-}^{-1}\is m)^{\tau}}}={\cal E}\Bigl({\cal T}(K-G_{-}^{-1}\is m)\Bigr)\in {\cal M}_{loc}(\mathbb G).\end{eqnarray*}
{\rm{(c)}} The process $1/ { \cal E}(G_{-}^{-1}\is m)^{\tau}$ is a $\mathbb G$-martingale, and for any $T\in (0,+\infty)$ we denote by ${\widetilde Q}_T$ the probability measure is given by
\begin{eqnarray}\label{Qtilde}
d{\widetilde Q}_T:=\Bigl({ \cal E}_{T\wedge\tau}\left(G_{-}^{-1}\is m\right)\Bigr)^{-1}dP.\end{eqnarray}
\end{theorem}

\section{Num\'eraire portfolio under random horizon}\label{section3}
This section addresses the impact of $\tau$ on the num\'eraire portfolio.  To this end, we start by giving a mathematical sense to Definition \ref{NP/LogOP} as follows. 
\begin{definition}\label{Math4Definition1.1} Let $(X,\mathbb H, Q)$ be a market model, where $\mathbb H$ is a filtration, $Q$ is a probability measure, and $X$ is an $\mathbb H$-semimartingale under $Q$. \\
{\rm{(a)}} A portfolio $\theta$ is a predictable process that is $X$-integrable (i.e. $\theta\in L(X, Q,\mathbb H)$). A wealth process $W^{\theta}$ associated to the pair $(\theta, x)$ of a portfolio and an initial capital (i.e. $x>0$) is given by 
\begin{eqnarray}\label{WealthProcess}
W^{\theta}:=x+\theta\is X.\end{eqnarray}
{\rm{(b)}} Let $\theta$ be a portfolio and $x>0$ be an initial capital. If the wealth process for the pair $(\theta, x)$ satisfies $W^{\theta}>0$ and $W^{\theta}_{-}>0$, then the process
\begin{eqnarray}\label{PrtfolioRate}
\varphi^{(\theta)}:={{\theta/W^{\theta}_{-}}}\quad \mbox{ is called {\it portfolio rate}},\end{eqnarray}
\end{definition}
 
 \begin{remark}\label{Remark4Definitions}
 {\rm{(a)}} It is important to remark that the portfolio rate $\varphi^{(\theta)}$ does not depend on the initial capital $x$ and it depends on the portfolio $\theta$ only. Furthermore, the triplet $(\theta, x, \varphi^{(\theta)})$ also satisfies $W^{\theta}=x{\cal E}(\varphi^{(\theta)}\is X)$.\\
  {\rm{(b)}} If ${\cal D}(X,\mathbb H)\not=\emptyset$, then for any pair $(\theta, x)$ with $W^{\theta}>0$ we also have $W^{\theta}_{-}>0$ and the portfolio rate exists. Thus, in this case, there is no loss of generality in assuming $x=1$ when addressing the problem (\ref{LogInfinite}). \\
   {\rm{(c)}} By comparing Definitions \ref{DeflatorDefinition} and \ref{NP/LogOP}, it is clear that, if the num\'eraire portfolio rate $\widetilde\phi$ for $(X,\mathbb H)$ exists, then ${\widetilde Z}:=1/{\cal E}(\widetilde\phi\is X)$ belongs to ${\cal D}(X,\mathbb H)$.\end{remark}
Below, we elaborate the principal  result of this section. 

\begin{theorem}\label{NumeraireGeneral} Suppose that $G>0$. Then the following assertions hold.\\
{\rm{(a)}} The num\'eraire portfolio for $(S,\mathbb F)$ exists if and only if the num\'eraire portfolio for $(S^{\tau}, \mathbb G)$ exists also.\\
{\rm{(b)}}  If $\widetilde\varphi$ is num\'eraire portfolio rate for $(S,\mathbb F)$, then  ${\widetilde\varphi}I_{\Lbrack0,\tau\Lbrack}$ is num\'eraire portfolio rate for $(S^{T\wedge\tau},\mathbb G, {\widetilde Q}_T)$, for any $T\in (0,+\infty)$, where ${\widetilde Q}_T$ is given by (\ref{Qtilde}).
{\rm{(c)}} If ${\widetilde\varphi}^{\mathbb G}$ is the num\'eraire portfolio rate for $(S^{\tau},\mathbb G)$, then $^{p,\mathbb F}({\widetilde\varphi}^{\mathbb G}I_{\Lbrack0,\tau\Lbrack})/G_{-}$ is num\'eraire portfolio rate for $(S^{\sigma}, \mathbb F, {\widehat Q}_{\sigma})$, for any $\mathbb F$-stopping time $\sigma$ such that ${\cal E}(G_{-}^{-1}\is m)^{\sigma}$ is martingale, where
\begin{eqnarray}\label{Qsigma}
d{\widehat Q}_{\sigma}:={\cal E}_{\sigma}(G_{-}^{-1}\is m)dP.\end{eqnarray}
\end{theorem}
Herein, we discuss some of the ingredients of the theorem and importantly its meaning and contributions, while its proof will be given afterwards. In virtue of \cite[Proposition 3.4 and Theorem 4.2]{ChoulliYansori1} (applied to the constant process $S\equiv 1$), the process ${\cal E}(-G_{-}^{-1}\is {\cal T}(m))=1/{\cal E}(G_{-}^{-1}\is m)^{\tau}$ is a $\mathbb G$-martingale, and hence ${\widetilde Q}_T$ is a well defined probability for any $T\in (0,+\infty)$. Thanks to a combination of \cite[Theorem 2.8]{ChoulliDengMa} (see also \cite{KardarasKaratzas}) with \cite[Theorem 2.15]{ACDJ1} and \cite[Theorem 2.4 or 2.7]{ACDJ3}, it is clear that under the condition $G>0$, the proof of assertion (a) follows immediately.  Hence, the principal contribution of our theorem lies in describing precisely and explicitly how num\'eraire portfolio for $(S^{\tau}, Q^{\mathbb G}, \mathbb G)$ can be obtained from num\'eraire portfolio of $(S, Q^{\mathbb F}, \mathbb F)$  and vice-versa, where $Q^{\mathbb G}$ and $Q^{\mathbb F}$ are probabilities on ${\cal G}_T$ and ${\cal F}_T$ respectively, that quantify stochastically some how the {\it correlated risks} borne by $\tau$. 
 \begin{proof}{\it of Theorem \ref{NumeraireGeneral}.} In virtue of the above discussion, this proof deals with assertions (b) and (c) only, and it will be given in two parts.\\
{\bf Part 1.} Here, we prove assertion (b). Suppose that  num\'eraire portfolio  for $(S,\mathbb F)$ exists, and denote by  $\widetilde\varphi$ its num\'eraire portfolio rate. Thus, for any $\varphi\in {\cal L}(S, \mathbb F)\cap L(S, \mathbb F)$, the process
\begin{eqnarray*}
X:={{{\cal E}(\varphi\is S)/{\cal E}(\widetilde\varphi\is S)}}\quad\mbox{is an $\mathbb F$-supermartingale}.\end{eqnarray*}
Hence, in virtue of Proposition \ref{Hzero4Log(Z)} -(a), there exist unique $M\in {\cal M}_{loc}(\mathbb F)$ and a nondecreasing and $\mathbb F$-predictable process $V$ such that $X={\cal E}(M)\exp(-V)$. Therefore, due to Theorem \ref{GeneralDeflators} -(b) (see also \cite[Proposition 3.4]{ChoulliYansori1}) that states that ${\cal E}(M)^{\tau}/{\cal E}(G_{-}^{-1}\is m)^{\tau}$ is a $\mathbb G$-local martingale, we deduce that $X^{\tau}/{\cal E}(G_{-}^{-1}\is m)^{\tau}$ is $\mathbb G$-supermartinagle, or equivalently $X^{\tau\wedge T}$ is $\mathbb G$-supermartingale under ${\widetilde Q}_T$ given in (\ref{Qtilde}), for any $T\in (0,+\infty)$. As $\varphi$ spans the set $ {\cal L}(S, \mathbb F)\cap L(S, \mathbb F)$, the proof of assertion (b) follows from Lemma \ref{PortfolioGtoF}-(b) and (c).\\
{\bf Part 2.} This part proves assertion (c). Suppose that  num\'eraire portfolio for $(S^{\tau},\mathbb G)$ exists, and denote by $\widetilde\varphi^{\mathbb G}$ its portfolio rate. Then put 
\begin{eqnarray*}
\varphi^{\mathbb F}:=^{p,\mathbb F}(\widetilde\varphi^{\mathbb G} I_{\Lbrack0,\tau\Lbrack})/G_{-},\end{eqnarray*}
and remark that $\varphi^{\mathbb F}=\widetilde\varphi^{\mathbb G} $ on $\Lbrack0,\tau\Lbrack$, and $\varphi^{\mathbb F}\in {\cal L}(S,\mathbb F)\cap L(S,\mathbb  F)$ due to Lemma \ref{PortfolioGtoF}-(b) and (c). Then for any $\varphi\in {\cal L}(S,\mathbb F)\cap L(S,\mathbb  F)$, the process ${\cal E}(\varphi\is S)^{\tau}/{\cal E}(\varphi^{\mathbb F}\is S)^{\tau}$ is a $\mathbb G$-supermartingale. As a result, for an $\mathbb F$-stopping time $\sigma$ such that ${\cal E}(G_{-}^{-1}\is m)^{\sigma}$ is martingale and for any $0\leq s\leq t\leq\sigma$, we have 
\begin{eqnarray*}
E\left[{{{\cal E}_{t\wedge\tau}(\varphi\is S)}\over{{\cal E}_{t\wedge\tau}(\varphi^{\mathbb F}\is S)}}\Big|\ {\cal G}_s\right]I_{\{\tau>s\}}\leq {{{\cal E}_{s}(\varphi\is S)}\over{{\cal E}_{s}(\varphi^{\mathbb F}\is S)}}I_{\{\tau>s\}}.
\end{eqnarray*}
By taking conditional expectation with respect to ${\cal F}_s$ on both sides, we obtain 
\begin{eqnarray}\label{Inequality200}
E\left[{{{\cal E}_{t\wedge\tau}(\varphi\is S)}\over{{\cal E}_{t\wedge\tau}(\varphi^{\mathbb F}\is S)}}I_{\{\tau>s\}}\Big|\ {\cal F}_s\right]\leq G_s{{{\cal E}_{s}(\varphi\is S)}\over{{\cal E}_{s}(\varphi^{\mathbb F}\is S)}}.
\end{eqnarray}
It is clear that we always have
\begin{eqnarray*}
{{{\cal E}_{t\wedge\tau}(\varphi\is S)}\over{{\cal E}_{t\wedge\tau}(\varphi^{\mathbb F}\is S)}}I_{\{\tau>s\}}
&&={{{\cal E}_{t}(\varphi\is S)}\over{{\cal E}_{t}(\varphi^{\mathbb F}\is S)}}I_{\{\tau>t\}}+\int_s^t{{{\cal E}_{u}(\varphi\is S)}\over{{\cal E}_{u}(\varphi^{\mathbb F}\is S)}}dD_u.\end{eqnarray*}
By inserting this and $G=G_0{\cal E}(-{\widetilde G}^{-1}\is D^{o,\mathbb F}){\cal E}(G_{-}^{-1}\is m)$ in (\ref{Inequality200}), we derive 
\begin{eqnarray*}
&&E_{{\widehat Q}_{\sigma}}\left[{{{\cal E}_{t}(\varphi\is S)}\over{{\cal E}_{t}(\varphi^{\mathbb F}\is S)}}{\cal E}_t(-{1\over{\widetilde G}}\is D^{o,\mathbb F})+\int_s^t {{{\cal E}_{u}(\varphi\is S)}\over{{\cal E}_{u}(\varphi^{\mathbb F}\is S)}}{\cal E}_u(-{1\over{\widetilde G}}\is D^{o,\mathbb F}){{dD^{o,\mathbb F}_u}\over{G_u}}\Big|\ {\cal F}_s\right]\\
&&\leq {{{\cal E}_{s}(\varphi\is S)}\over{{\cal E}_{s}(\varphi^{\mathbb F}\is S)}}{\cal E}_s(-{\widetilde G}^{-1}\is D^{o,\mathbb F})
\end{eqnarray*}
Then put $X_u:= {\cal E}_{u\wedge\sigma}(\varphi\is S)/{\cal E}_{u\wedge\sigma}(\varphi^{\mathbb F}\is S)$ for $u\geq 0$, and deduce that the above inequality is equivalent to the fact that 
\begin{eqnarray*}
X{\cal E}(-{1\over{\widetilde G}}\is D^{o,\mathbb F})^{\sigma}+\left({{X}\over{G}}{\cal E}_u(-{1\over{\widetilde G}}\is D^{o,\mathbb F})\is D^{o,\mathbb F}\right)^{\sigma}\ \mbox{is an $(\mathbb F, \widehat Q_{\sigma})$-supermartingale.}\end{eqnarray*}
By combining ${\cal E}(-{\widetilde G}^{-1}\is D^{o,\mathbb F})/G={\cal E}_{-}(-{\widetilde G}^{-1}\is D^{o,\mathbb F})/{\widetilde G}$ with integration by part formula, we deduce that the above fact is equivalent to ${\cal E}_{-}(-{\widetilde G}^{-1}\is D^{o,\mathbb F})\is X$ being an $\mathbb F$-supermartingale under $ \widehat Q_{\sigma}$. Therefore, we conclude that $X$ is a nonnegative $\mathbb F$-local supermartingale under $ \widehat Q_{\sigma}$, and hence assertion (c) follows immediately. This ends the proof of the theorem.\qed
\end{proof}
\begin{corollary}\label{ParticularCases}  The following assertions hold.\\
{\rm{(a)}} Suppose ${\cal E}(G_{-}^{-1}\is m)$ is a martingale. Then the num\'eraire portfolio for $(S^{\tau},\mathbb G)$ exists if and only if for any  $T\in (0,+\infty)$, the num\'eraire portfolio for $(S^T,\mathbb F, {\widehat Q}_T)$ exists, where  ${\widehat Q}_T$ is given by (\ref{Qsigma}). Furthermore, both portfolios coincide on $\Lbrack0,\tau\wedge T\Lbrack$. \\
{\rm{(b)}} Suppose $\tau$ is a pseudo-stopping time (i.e. $E[M_{\tau}]=E[M_0]$ for any bounded $\mathbb F$-martingale). Then num\'eraire portfolio for $(S^{\tau},\mathbb G)$ exists if and only if num\'eraire portfolio for $(S,\mathbb F)$ exists, and both portfolios coincide on $\Lbrack0,\tau\Lbrack$.
\end{corollary}
\begin{proof} Assertion (a) is a direct consequences of  Theorem \ref{NumeraireGeneral}, while assertion (b) follows from combining assertion (a) and the fact that when $\tau$ is a pseudo-stopping time then $m\equiv m_0$ and hence ${\cal E}(G_{-}^{-1}\is m)=1$ and  ${\widehat Q}_T=P$ on ${\cal F}_T$ for any $T\in(0,+\infty)$. This ends the proof of the corollary.\qed
\end{proof}
\section{ Log-optimal portfolio for $(S^{\tau},\mathbb G)$: Existence and duality}\label{section4}
This section quantifies the impact of $\tau$ on the existence of log-optimal portfolio and hence answers (\ref{Q2}), and it is not technical at all. 
In virtue of \cite[Theorem 1]{ChoulliYansori2}, see Theorem \ref{LemmaCrucial} in the appendix, the practical and easy manner to deal with log-optimal portfolio is to look at the dual set, i.e. the set ${\cal D}_{log}(S^{\tau},\mathbb G)$ given by (\ref{DeflatorsLOG}), or equivalently to address the dual minimization problem 
 \begin{eqnarray}\label{dualproblem}
\min_{Z\in {\cal D}_{log}(S^{\tau},\mathbb G)}E\left[-\ln(Z_T)\right].
\end{eqnarray}
 We start by defining Hellinger processes, previously defined in \cite{ChoulliStricker2005,ChoulliStricker2007}. These processes appear naturally in quantifying the information borne in $\tau$. 
 \begin{definition}\label{Hellinger}
 Let $N$ be an $\mathbb H$-local martingale such that $1+\Delta N>0$.\\
 1) We call a Hellinger process of order zero for $N$, denoted by $h^{(0)}(N,\mathbb H)$, the process $ h^{(0)}(N,\mathbb H):=\left( H^{(0)}(N,\mathbb H)\right)^{p,\mathbb H}$ when this projection exists,  where 
 \begin{eqnarray}\label{HellingerLog}
 H^{(0)}(N,\mathbb H):={1\over{2}}\langle N^c\rangle^{\mathbb H}+\sum\left(\Delta N-\ln(1+\Delta N)\right).\end{eqnarray}
2) We call an entropy-Hellinger process for $N$, denoted by $h^{(E)}(N,\mathbb H)$, the process $h^{(E)}(N,\mathbb H):=\left( H^{(E)}(N,\mathbb H) \right)^{p,\mathbb H}$ when this projection exists, where 
\begin{eqnarray}\label{HellingerExpo}
H^{(E)}(N,\mathbb H):={1\over{2}}\langle N^c\rangle^{\mathbb H}+\sum\left((1+\Delta N)\ln(1+\Delta N)-\Delta N\right).\end{eqnarray}
3) Let $Q^1$ and $Q^2$ be two probabilities such that $Q^2\ll Q^1$ and $T\in (0,+\infty)$. If $Q^i_T:=Q^i\big|_{{\cal H}_T}$ denote the restriction of $Q^i$ on ${\cal H}_T$  ($i=1,2$), then  
\begin{eqnarray}\label{entropy}
{\cal H}_{\mathbb H}(Q^1_T\big| Q^2_T):=E_{Q^2}\left[{{dQ^1_T}\over{dQ^2_T}}\ln\left({{dQ^1_T}\over{dQ^2_T}}\right)\right].\end{eqnarray}
\end{definition}
Now, we are in the stage of stating one of our principal results of this section.
\begin{theorem}\label{LogOPexistence} Suppose $G>0$. Then the following assertions are equivalent.\\
{\rm{(a)}} Log-optimal portfolio for  $(S^{\tau},\mathbb G)$ exists,  and $\widetilde\varphi^{\mathbb G}$ denotes its portfolio rate.  \\
{\rm{(b)}} There exist $K\in{\cal M}_{loc}(\mathbb F)$ and a nondecreasing and $\mathbb F$-predictable process $V$  such that $K_0=V_0=0$,  ${\cal E}(K)\exp(-V)\in {\cal D}(S,\mathbb F)$, and 
\begin{eqnarray*}
E\left[\left({\widetilde G}\is (V+H^{(0)}(K,P))\right)_T+(G_{-}\is h^{(E)}(G_{-}^{-1}\is m,P))_T-\langle K,m\rangle^{\mathbb F}_T\right]<+\infty.\end{eqnarray*}
 {\rm{(c)}} The dual problem (\ref{dualproblem}) admits a unique solution, i.e. there exists a unique ${\widetilde Z}^{\mathbb G}\in {\cal D}_{log}(S^{\tau},\mathbb G)$ such that 
\begin{eqnarray*}\label{MinimizationLog}
\min_{Z\in {\cal D}(S^{\tau},\mathbb G)}E\Bigl[-\ln(Z_T)\Bigr]= E\Bigl[-\ln({\widetilde Z}^{\mathbb G}_T)\Bigr].  \end{eqnarray*}
 {\rm{(d)}} There exists ${\widetilde Z}^{\mathbb F}\in {\cal D}(S, \mathbb F)$ such that $({\widetilde Z}^{\mathbb F})^{\tau}/{\cal E}(G_{-}^{-1}\is m)^{\tau}\in {\cal D}_{log}(S^{\tau},\mathbb G)$ and 
\begin{eqnarray}\label{MinimiDualEquivLog}
\inf_{Z^{\mathbb G}\in {\cal D}(S^{\tau},\mathbb G)}E\Bigl[-\ln(Z^{\mathbb G}_T)\Bigr]=E\Bigl[-\ln({\widetilde Z}^{\mathbb F}_{T\wedge{\tau}}/{\cal E}_{\tau\wedge T}(G_{-}^{-1}\is m))\Bigr].
\end{eqnarray}

 Furthermore, when the triplet $( \widetilde\varphi^{\mathbb G}, {\widetilde Z}^{\mathbb G}, {\widetilde Z}^{\mathbb F})$ exists, it satisfies the following
 \begin{eqnarray}\label{Relationship}
 {\cal E}( \widetilde\varphi^{\mathbb G}\is S^{\tau})={1\over{{\widetilde Z}^{\mathbb G}}}={{{\cal E}(G_{-}^{-1}\is m)^{\tau}}\over{({\widetilde Z}^{\mathbb F})^{\tau}}}.\end{eqnarray}
\end{theorem}
The importance of this theorem resides in assertions (b) and (d) and the second equality in (\ref{Relationship}). Assertion (b) gives us a practical way to check the existence of log-optimal portfolio for  $(S^{\tau},\mathbb G)$, while assertion (d) and the second equality in (\ref{Relationship}) describe the structures of the optimal deflator dual to log-optimal portfolio. In fact, in virtue to \cite{ChoulliYansori1} --see also Theorem \ref{GeneralDeflators}--, any deflator $Z^{\mathbb G}$ for  $(S^{\tau},\mathbb G)$ is the product of three orthogonal positive local martingales that are deflators for three orthogonal risks respectively, and is represented by a triplet of $\mathbb F$-observable processes $(Z^{\mathbb F}, \varphi^{(0)}, \varphi^{(pr)})$. Assertion (d) claims that the optimal deflator does not bear ''pure default" risks at all, and its triplet characterization takes the form of $({\widetilde Z}^{\mathbb F}, 0,0)$.
\begin{proof}{\it Theorem \ref{LogOPexistence}}  The proof of (a)$\Longleftrightarrow$ (c) is a direct application  of Theorem \ref{LemmaCrucial} for the model $(X,\mathbb H)=(S^{\tau}, \mathbb G)$ (see the equivalence (c) $\Longleftrightarrow$ (d) of this latter theorem), while (d)$\Longrightarrow$ (c) is obvious.  Thus, the rest of the proof focuses on the remaining equivalences and implications. To this end, we start deriving some important and useful remarks that we summarize in the following lemma.
\begin{lemma}\label{Lemma4proof} The following assertions hold.\\
 {\rm{(i)}}  For any $Z^{\mathbb G}\in {\cal D}_{log}(S^{\tau},\mathbb G)$, there exists $Z^{\mathbb F}\in {\cal D}(S,\mathbb F)$ such that 
 \begin{eqnarray}\label{DominationPrincipal}
 E\left[ -\ln( Z^{\mathbb G}_T)\right]\geq E\Bigl[-\ln\left({{Z^{\mathbb F}_{T\wedge\tau}/{\cal E}_{T\wedge\tau}(G_{-}^{-1}\is m)}}\right)\Bigr].\end{eqnarray}
  {\rm{(ii)}} We always have 
   \begin{eqnarray}\label{MinimizationReduction}
 \inf_{ Z^{\mathbb G}\in{\cal D}(S^{\tau},\mathbb G)} E\left[ -\ln( Z^{\mathbb G}_T)\right]= \inf_{ Z^{\mathbb F}\in{\cal D}(S,\mathbb F)}E\Bigl[-\ln\left({{Z^{\mathbb F}_{T\wedge\tau}/{\cal E}_{T\wedge\tau}(G_{-}^{-1}\is m)}}\right)\Bigr].\end{eqnarray}
 {\rm{(iii)}}  If $Z^{\mathbb F}\in {\cal D}(S,\mathbb F)$ such that $(Z^{\mathbb F})^{\tau}/{\cal E}(G_{-}^{-1}\is m)^{\tau}\in {\cal D}_{log}(S^{\tau},\mathbb G)$,  there exist $K\in {\cal M}_{0,loc}(\mathbb F)$ and a nondecreasing and $\mathbb F$-predictable process $V$ such that $V_0=K_0=0$, $Z^{\mathbb F}:={\cal E}(K)\exp(-V)$,  and 
  \begin{eqnarray}\label{Equality100}
&&E\left[-\ln\left({{Z^{\mathbb F}_{T\wedge\tau}}\over{{\cal E}_{T\wedge\tau}(G_{-}^{-1}\is m)}}\right)\right]=\\
&&E\left[(G_{-}\is V+{\widetilde G}\is H^{(0)}(K,P))_T-\langle K, m\rangle^{\mathbb F}_T+G_{-}\is h^{E}(G_{-}^{-1}\is m,P)_T\right].\nonumber\end{eqnarray}
\end{lemma}
Then the proof of (c)$\Longrightarrow$ (d) follows from combining (\ref {MinimizationReduction}) with assertion (i) of the lemma applied to ${\widetilde Z}^{\mathbb G}$. 
Therefore, the proof of the theorem will end as soon we prove  (c)$\Longleftrightarrow$ (b).  To this end, due to Theorem \ref{LemmaCrucial}, assertion (c) is equivalent to ${\cal D}_{log}(S^{\tau},\mathbb G)\not=\emptyset$. Thanks to (\ref{MinimizationReduction}) again, this latter claim holds if and only if there exists $Z^{\mathbb F}\in {\cal D}(S, \mathbb F)$ such that $(Z^{\mathbb F})^{\tau}/{\cal E}(G_{-}^{-1}\is m)^{\tau}\in {\cal D}_{log}(S^{\tau},\mathbb G)$. Thus, the equivalence  (c)$\Longleftrightarrow$ (b) follows immediately form combining these facts with Lemma \ref{Lemma4proof}-(iii). Therefore, the rest of this proof focuses on proving Lemma \ref{Lemma4proof} in two parts.\\
{\bf Part 1.} Here we prove assertions (i) and (ii) of Lemma \ref{Lemma4proof}.
On the one hand, for any $Z^{\mathbb G}\in{\cal D}(S^{\tau},\mathbb G)$, we apply Theorem \ref{GeneralDeflators} and deduce the existence of a triplet $\left(Z^{\mathbb F}, \varphi^{(o)}, \varphi^{(pr)}\right)$ that belongs to  ${\cal D}(S, \mathbb F)\times {\cal I}^o_{loc}(N^{\mathbb G},\mathbb G)\times L^1_{loc}({\rm{Prog}}(\mathbb F),P\otimes D)$ and satisfies  
$$
\varphi^{(pr)}>-1,\quad P\otimes D-a.e.,\quad 
-{{\widetilde G}\over{ G}}<\varphi^{(o)}<{{\widetilde G}\over{\widetilde G -G}},\quad\quad\quad P\otimes D^{o,\mathbb F}\mbox{-a.e..} $$
and  
\begin{eqnarray}\label{ZG2ZF}
Z^{\mathbb G}={{(Z^{\mathbb F})^{\tau}}\over{{\cal E}(G_{-}^{-1}\is m)^{\tau}}}{\cal E}(\varphi^{(0)}\is N^{\mathbb G}){\cal E}(\varphi^{(pr)}\is D).\end{eqnarray}
This implies that
 $$
 -\ln( Z^{\mathbb G})=-\ln({{(Z^{\mathbb F})^{\tau}/{\cal E}(G_{-}^{-1}\is m)^{\tau}}})-\ln( {\cal E}(\varphi^{(0)}\is N^{\mathbb G}))-\ln({\cal E}(\varphi^{(pr)}\is D)).$$
 Hence, thanks to Proposition \ref{Hzero4Log(Z)}-(c), we deduce that $Z^{\mathbb G}\in {\cal D}_{log}(S^{\tau},\mathbb G)$ if and only if $(Z^{\mathbb F})^{\tau}/{\cal E}(G_{-}^{-1}\is m)^{\tau}$ belongs to ${\cal D}_{log}(S^{\tau},\mathbb G)$ and $-\ln\left({\cal E}(\varphi^{(0)}\is N^{\mathbb G})\right)$ and $-\ln\left({\cal E}(\varphi^{(pr)}\is D)\right)$ are uniformly integrable $\mathbb G$-submartingales, and (\ref{DominationPrincipal}) follows immediately. This proves assertion (a) of the lemma.
 On the other hand, thanks to Theorem \ref{GeneralDeflators} -(c), the process $Z^{\tau}/{\cal E}(G_{-}^{-1}\is m)^{\tau}$ always belongs to ${\cal D}(S^{\tau},\mathbb G)$ as soon as  $Z\in {\cal D}(S,\mathbb F)$, and hence
 $$\inf_{Z^{\mathbb G}\in {\cal D}(S^{\tau},\mathbb G)}E\Bigl[-\ln(Z^{\mathbb G}_T)\Bigr]\leq \inf_{Z\in {\cal D}(S,\mathbb F)}E\Bigl[-\ln(Z_{T\wedge{\tau}}/{\cal E}_{\tau\wedge T}(G_{-}^{-1}\is m))\Bigr].$$
 Therefore, by combining  this latter inequality with assertion (i) of the lemma, we conclude that (\ref{MinimizationReduction}) always holds, and assertion (ii) is proved. \\
 {\bf Part 2.} Here, we prove assertion (iii) of the lemma. To this end, we consider $Z^{\mathbb F}\in  {\cal D}(S,\mathbb F)$. Thus, thanks to Proposition \ref{Hzero4Log(Z)}-(a), we deduce the existence of $K\in {\cal M}_{0,loc}(\mathbb F)$ and a nondecreasing and $\mathbb F$-predictable process $V$ such that $V_0=K_0=0$ and $Z^{\mathbb F}:={\cal E}(K)\exp(-V)$. Therefore, we derive 
 \begin{eqnarray}\label{Optimization1}
&&-\ln\left({{(Z^{\mathbb F})^{\tau}}\over{{\cal E}(G_{-}^{-1}\is m)^{\tau}}}\right)=-\ln((Z^{\mathbb F})^{\tau})+\ln( {\cal E}(G_{-}^{-1}\is m)^{\tau})\nonumber\\
&&={\mathbb G}\mbox{-local martingale}-{{I_{\Lbrack0,\tau\Lbrack}}\over{G_{-}}}\is\langle K, m\rangle^{\mathbb F}+H^{(0)}(K,P)^{\tau}\nonumber\\
&&+V^{\tau}+{{I_{\Lbrack0,\tau\Lbrack}}\over{G_{-}^2}}\is\langle m\rangle^{\mathbb F}-H^{(0)}(G_{-}^{-1}\is m,P)^{\tau}.\end{eqnarray}
Remark that the two processes $I_{\Rbrack0,\tau\Lbrack}G_{-}^{-2}\is\langle m\rangle^{\mathbb F}$ and $H^{(0)}(G_{-}^{-1}\is m,P)^{\tau}$ have variations that are $\mathbb F$-locally integrable and 
\begin{eqnarray}\label{HellingerE}
&&\left({{I_{\Lbrack0,\tau\Lbrack}}\over{G_{-}^2}}\is\langle m\rangle^{\mathbb F}-H^{(0)}(G_{-}^{-1}\is m,P)^{\tau}\right)^{p,\mathbb F}\nonumber\\
&&={1\over{G_{-}}}\is\langle m\rangle^{\mathbb F}-\left({\widetilde G}\is H^{(0)}({1\over{G_{-}}}\is m,P)\right)^{p,\mathbb F}\nonumber\\
&&={1\over{2G_{-}}}\is\langle m^c\rangle^{\mathbb F}+G_{-}\is \left(\sum ({{\Delta m}\over{G_{-}}})^2\right)^{p,\mathbb F}-\left(\sum {\widetilde G}({{\Delta m}\over{G_{-}}}-\ln(1+{{\Delta m}\over{G_{-}}}))\right)^{p,\mathbb F}\nonumber\\
&&={1\over{2G_{-}}}\is\langle m^c\rangle^{\mathbb F}+\left(\sum G_{-}\left((1+{{\Delta m}\over{G_{-}}})\ln(1+{{\Delta m}\over{G_{-}}}) -{{\Delta m}\over{G_{-}}}\right)\right)^{p,\mathbb F}\nonumber\\
&&=G_{-}\is h^{(E)}(G_{-}^{-1}\is m,P).
\end{eqnarray}
Thus, by combining this with (\ref {Optimization1}), Proposition \ref{Hzero4Log(Z)}-(b), and the easy fact that $U\in {\cal A}^+(\mathbb G)$ iff $U^{p,\mathbb F}\in {\cal A}^+(\mathbb F)$ for any nondecreasing process $U$, the equality (\ref{Equality100}) follows immediately. This ends the proof of the lemma.
\qed
\end{proof}
At the practical level, one starts with an initial model $(S,\mathbb F)$ admitting log-optimal portfolio, and tries to describe models of $\tau$ allowing $(S^{\tau},\mathbb G)$ to admit log-optimal portfolio. This boils down to the question (\ref{Q3}), and the answer to this follows from Theorem \ref{LogOPexistence} and is given by the following. 
\begin{theorem}\label{thoeremAppl1} Suppose $G>0$, and log-optimal portfolio for $(S,\mathbb F)$ exists. Then log-optimal portfolio for $(S^{\tau},\mathbb G)$ exists  if and only if
\begin{eqnarray}\label{EntropyCondition}
{\cal H}_{\mathbb G}(P_T\big| {\widetilde Q}_T)=E\left[ (G_{-}\is h^{(E)}(G_{-}^{-1}\is m,P))_T\right]<+\infty.
\end{eqnarray}
Here $h^{(E)}(N,P)$, for any $N\in {\cal M}_{0,loc}(\mathbb  F)$ with $1+\Delta N\geq 0$, and  the entropy ${\cal H}_{\mathbb G}(P_T\big| {\widetilde Q}_T)$ are given by Definition \ref{Hellinger}, and ${\widetilde Q}_T$ is defined in (\ref{Qtilde}).
\end{theorem}

\begin{proof} Due to Theorem \ref{LemmaCrucial} and Proposition \ref{Hzero4Log(Z)}-(b),  $(S,\mathbb F)$ admits log-optimal portfolio iff there exist $K\in {\cal M}_{loc}(\mathbb F)$ and a nondecreasing and $\mathbb F$-predictable process $V$ such that $K_0=V_0=0$, $Z:={\cal E}(K)\exp(-V)\in {\cal D}(S,\mathbb F)$, and 
$$E[-\ln(Z_T)]=E[V_T+H^{(0)}(K,P)_T]<+\infty.$$
Thus, due to Lemma \ref{H0toH1martingales}, we conclude that $\sqrt{[K,K]}$ is an integrable process ( or equivalently $K$ is a martingale such that $\displaystyle\sup_{0\leq t\leq T}\vert K_t\vert\in L^1(P)$), and hence the process $\langle K,m\rangle^{\mathbb F}$ has integrable variation as $m$ is a BMO $\mathbb F$-martingale. I.e. there exists a positive constant $C>0$ such that for any $\mathbb F$-stopping times $\sigma_1$ and $\sigma_2$ such that $\sigma_1\leq \sigma_2$, we have 
\begin{eqnarray*}
(\Delta m_{\sigma_1})^2+ E\left[[m,m]_{\sigma_2}-[m,m]_{\sigma_1}\big|{\cal F}_{\sigma_1}\right]\leq C\quad P\mbox{-a.s.}.\end{eqnarray*} Therefore, the process 
$G_{-}\is V+({\widetilde G}\is H^{(0)}(K,P))^{p,\mathbb F}-\langle K,m\rangle^{\mathbb F}$ belongs to ${\cal A}(\mathbb F)$. Thus, in virtue of Theorem  \ref{LogOPexistence}, log-optimal portfolio for $(S^{\tau},\mathbb G)$ exists if and only if the second condition in (\ref{EntropyCondition}) holds. Thus, the proof of the theorem follows immediately from the following two equalities
\begin{eqnarray}\label{equa401} {\cal H}_{\mathbb G}\left(P\big|{\widetilde Q}_T\right)&&=E\left[\ln({\cal E}_{T\wedge\tau}( {1\over{G_{-}}}\is m))\right]=E\left[G_{-}\is h^{(E)}({1\over{G_{-}}}\is m, \mathbb F)_T\right].\hskip 0.85cm
\end{eqnarray}
The last equality is a direct result of (\ref{HellingerE}). This proves the theorem.\qed
\end{proof}
In the remaining part of this section,  we {\it naturally} connects log-optimal portfolio of $(S^{\tau},\mathbb G)$  with log-optimal portfolio of $(S,\mathbb F)$ under adequate change of probability, and with the optimal portfolio for the economic model $(S,\mathbb F, \widetilde{U})$ where $\widetilde U$ is a random field utility that will be specified. To this end, we define the set of admissible portfolios for $\widetilde U$ as follows
\begin{eqnarray}\label{AdmissibleSet4Utilde}
\hskip -0.70cm{\cal A}_{adm}(S, \widetilde U):=\left\{\theta\in L(S, \mathbb F)\Bigg|\begin{array}{lll}1+\theta\is S>0,\\
E\left[\max(0,-{\widetilde U}(T, 1+(\theta\is S)_T))\right]<+\infty\end{array}\right\}.
\end{eqnarray}
\begin{theorem}\label{theorem3.8}Suppose $G>0$ and
\begin{eqnarray}\label{FiniteEntropy}
E\Bigl[{\cal E}_T(G_{-}^{-1}\is m)\ln\left({\cal E}_T(G_{-}^{-1}\is m)\right)\Bigr]<+\infty.\end{eqnarray} 
Then the following assertions are  equivalent  and sufficient for the existence of log-optimal portfolio for the model $(S^{\tau},\mathbb G)$.\\
{\rm{(a)}}  $({\widetilde U}, S, \mathbb F)$ admits an optimal portfolio, with ${\widetilde U}(t,x):={\cal E}_t(G_{-}^{-1}\is m)\ln(x)$ for $ x>0$, i.e. there exists $\theta^*\in {\cal A}_{adm}(S, \widetilde U)$ such that 
\begin{eqnarray}\label{Utilde}
\max_{\theta\in{\cal A}_{adm}(S,\widetilde U)} E\left[{\widetilde U}(T, 1+(\theta\is S)_T)\right]= E\left[{\widetilde U}(T, 1+(\theta^*\is S)_T)\right].\end{eqnarray}
 {\rm{(b)}} Log-optimal portfolio for $(S, {\widehat Q}, \mathbb F)$ exists, where $d{\widehat Q}:={\cal E}_T(G_{-}^{-1}\is m)dP$.\\
Furthermore, the three portfolios coincide on $\Lbrack0,\tau\Lbrack$ when they exists.
\end{theorem}

\begin{proof} Remark that, by combining \cite[Proposition 3.6]{ChoulliStricker2005} or \cite[Theorem 2.9]{ChoulliStricker2006} and $G_{-}=G_0{\cal E}_{-}(-{\widetilde G}^{-1}\is D^{o,\mathbb F}){\cal E}_{-}(G_{-}^{-1}\is m)\leq {\cal E}_{-}(G_{-}^{-1}\is m)$, we derive
\begin{eqnarray*}
E\Bigl[{\cal E}_T(G_{-}^{-1}\is m)\ln\left({\cal E}_T(G_{-}^{-1}\is m)\right)\Bigr]&&=E\left[\int_0^T{\cal E}_{s-}(G_{-}^{-1}\is m) dh^{(E)}_s(G_{-}^{-1}\is m, P)\right]\\
&&\geq E\left[\left(G_{-}\is h^{(E)}(G_{-}^{-1}\is m, P)\right)_T\right].
\end{eqnarray*}
Thus, if (\ref{FiniteEntropy}) holds, then the condition (\ref{EntropyCondition}) holds, ${\cal E}(G_{-}^{-1}\is m)^T$ becomes a uniformly integrable martingale, and $d{\widehat Q}:={\cal E}_T(G_{-}^{-1}\is m)dP$ is a well defined probability measure with finite entropy.  Therefore, on the one hand, direct calculations show that  ${\cal A}_{adm}(S,\widetilde U)=\Theta(S, \mathbb F, {\widehat Q})$. On the other hand, for any $\theta\in{\cal A}_{adm}(S,\widetilde U)$, 
\begin{eqnarray*}
E\left[{\widetilde U}(T, 1+(\theta\is S)_T)\right]=E^{{\widehat Q}}\left[\ln( 1+(\theta\is S)_T)\right].\end{eqnarray*}
This proves the equivalence between assertions (a) and (b). The rest of this proof focuses on proving that these assertions imply the existence of log-optimal portfolio for $(S^{\tau}, \mathbb G)$. To this end, we suppose that log-optimal portfolio of $(S, \mathbb F, {\widehat Q})$ exists. Thanks to Theorem \ref{LemmaCrucial}, this fact is equivalent to the existence of $Z:={\cal E}(K)e^{-V}\in {\cal D}(S, \mathbb F)$, where $K\in {\cal M}_{loc}(\mathbb F)$ and $V$ is a nondecreasing and $\mathbb F$-predictable such that $Z/ {\cal E}(G_{-}^{-1}\is m)\in {\cal D}_{log}(S,\mathbb F, {\widehat Q})$. Due to (\ref{FiniteEntropy}), $K-\langle K, G_{-}^{-1}\is m\rangle^{\mathbb F}\in {\cal M}_{loc}(\mathbb F, {\widehat Q})$ and direct Ito's calculations, we derive
\begin{eqnarray*}
&&E^{\widehat Q}\left[-\ln\left(Z_T/ {\cal E}_T(G_{-}^{-1}\is m)\right)\right]\\
&&= E^{\widehat Q}\left[-\ln\left(Z_T\right)\right]+E^{\widehat Q}\left[\ln\left( {\cal E}_T(G_{-}^{-1}\is m)\right)\right]\\
&&=G_0^{-1}E\left[\int_0^T {\cal E}_{s-}(G^{-1}\is D^{o,\mathbb F})^{-1}d U_s\right]+E\left[{\cal E}_T(G_{-}^{-1}\is m)\ln\left({\cal E}_T(G_{-}^{-1}\is m)\right)\right],\\
&&\geq G_0^{-1}E[U_T] +E\Bigl[{\cal E}_T(G_{-}^{-1}\is m)\ln\left({\cal E}_T(G_{-}^{-1}\is m)\right)\Bigr],
\end{eqnarray*}
where $U:=G_{-}\is V+{\widetilde G}\is H^{(0)}(K, P)-\langle K, m\rangle^{\mathbb F}$. This proves that, under (\ref{FiniteEntropy}), the condition $E^{\widehat Q}\left[-\ln\left(Z_T/ {\cal E}_T(G_{-}^{-1}\is m)\right)\right]<+\infty$ implies that $E[U_T]<+\infty$, which is equivalent to the existence of log-optimal portfolio for $(S^{\tau}, \mathbb G)$ in virtue of Theorem \ref{LogOPexistence}. This ends the proof of the theorem.\qed
\end{proof}
\begin{corollary}\label{pseudo}
Suppose $G>0$, and log-optimal portfolio for $(S,\mathbb F)$ exists.  Then the following conditions are all sufficient for the existence of log-optimal portfolio of $(S^{\tau},\mathbb G)$. Furthermore, both portfolios coincide on $\Lbrack0,\tau\Lbrack$.\\
{\rm{(a)}} $\tau$ is a pseudo-stopping time.\\
{\rm{(b)}} $\tau$ is independent of $\mathbb F$.\\
{\rm{(c)}} Every $\mathbb F$-martingale is a $\mathbb G$-local martingale (i.e. immersion holds).
\end{corollary}

\begin{proof}  On the one hand, it is easy to remark that assertion (a) is implied by either assertions (b)  or (c).  On the other hand, when assertion (a) holds, the existence of log-optimal portfolio for $(S^{\tau},\mathbb G)$ is a direct application of Theorem \ref{theorem3.8} and the fact that $m\equiv m_0$ --as this is a characterization for $\tau$ being a pseudo-stopping time due to \cite{Nik2005}--, which implies (\ref{FiniteEntropy}). This proves the corollary. \qed
\end{proof}
\section{ Log-optimal portfolio for $(S^{\tau},\mathbb G)$: Description and applications}\label{section5}
This section addresses the explicit computation of num\'eraire and log-optimal portfolios for $(S^{\tau},\mathbb G)$ in terms of $\mathbb F$-observable data of the model, and this will answer completely (\ref{Q4}). In virtue of \cite{ChoulliYansori2}, see also Theorem \ref{LemmaCrucial}, this will be possible only by using the {\it predictable characteristics} of semimartingales which are technical but powerful statistical tools for parametrization. Thus, this section starts recalling these characteristics for a general model $( X,\mathbb H)$ defined on the complete probability space $(\Omega, {\cal F}, P)$, where $\mathbb H$ is a filtration satisfying the usual conditions of completeness and right continuity, and $X$ is an $\mathbb H$-semimartingale.  We denote
$$\label{sigmaFields}
\widetilde {\cal O}(\mathbb H):={\cal O}(\mathbb H)\otimes {\cal
B}({\mathbb R}^d),\ \ \ \ \ \widetilde{\cal P}(\mathbb H):= {\cal
P}(\mathbb H)\otimes {\cal B}({\mathbb R}^d),
$$
where ${\cal B}({\mathbb R}^d)$ is the Borel $\sigma$-field on
${\mathbb R}^d$, the $\mathbb H$-optional and $\mathbb H$-predictable $\sigma$-fields respectively on the $\Omega\times[0,+\infty)\times{\mathbb R}^d$.  With a c\`adl\`ag $\mathbb H$-adapted process
$X$, we associate the optional random measure $\mu_X$
defined by
\begin{eqnarray*}\label{mesuresauts}
\mu_X(dt,dx):=\sum_{u>0} I_{\{\Delta X_u \neq 0\}}\delta_{(u,\Delta
X_u)}(dt,dx)\,.\end{eqnarray*}
For a product-measurable functional
$W\geq 0$ on $\Omega\times \mathbb R_+\times{\mathbb R}^d$, we
denote $W\star\mu_X$ (or sometimes, with abuse of notation,
$W(x)\star\mu_X$) the process
\begin{eqnarray*}\label{Wstarmu}
(W\star\mu_X)_t:=\int_0^t \int_{{\mathbb R}^d\setminus\{0\}}
W(u,x)\mu_X(du,dx)=\sum_{0<u\leq t} W(u,\Delta X_u) I_{\{ \Delta
X_u\not=0\}}.\end{eqnarray*}
On
$\Omega\times\mathbb R_+\times{\mathbb R}^d$, we define the
measure $M^P_{\mu_X}:=P\otimes\mu_X$ by $$M^P_{\mu_X}\left(W\right):=\int W
dM^P_{\mu_X}:=E\left[(W\star\mu)_\infty\right],$$ (when the expectation is well defined). The \emph{conditional expectation} given $
\widetilde{\cal P}$ of a product-measurable functional
$W$, denoted by $M^P_{\mu_X}(W\big|\widetilde{\cal P})$, is the unique $ \widetilde{\cal P}$-measurable
functional $\widetilde W$ satisfying $
E\left[(W I_{\Sigma}\star\mu_X)_\infty \right]=E\left[({\widetilde W}
I_{\Sigma}\star\mu_X)_\infty \right]$ for any $\Sigma$ belonging to $\widetilde{\cal P}.$
For the reader's convenience, we recall {\it the
 canonical decomposition} of $X$ (for more related details, we refer the reader to \cite[Theorem 2.34, Section II.2]{JS03})
\begin{equation}\label{modelSbis}
X=X_0+X^c+h\star(\mu_X-\nu_X)+b\is A^X+(x-h)\star\mu_X,\end{equation}
where $h$, defined as $h(x):=xI_{\{ \vert x\vert\leq 1\}}$, is the
truncation function, and \mbox{$h\star(\mu_X-\nu_X)$} is the unique pure jump $\mathbb H$-local martingale with jumps given by $h(\Delta X)I_{\{\Delta X\not=0\}}-\ ^{p,\mathbb H}(h(\Delta X)I_{\{\Delta X\not=0\}})$. For the matrix $C^X$ with entries $C^{ij}:=\langle X^{c,i},
X^{c,j}\rangle $, and $\nu_X$, we can find a version satisfying
$$  C^X=c^X\is A^X,\ \nu_X(d t,\ d x)=d A_t^XF_t^X(d x),\
F_t^X(\{0\})=0,\ \displaystyle{\int} (\vert
x\vert^2\wedge 1)F_t^X(d x)\leq 1.
$$ Here $A^X$ is increasing and predictable, $b^X$
and $c^X$ are predictable processes,
 $F_t^X(d x)$ is a predictable kernel, $b_t^X(\omega)$ is a vector in $\hbox{I\kern-.18em\hbox{R}}^d$ and
$c_t^X(\omega)$ is a symmetric $d\times d$-matrix, for all $(\omega,\
t)\in\Omega\times \mathbb R_+$. For $W\geq 0$ and $\widetilde{\cal P}$-measurable, we put 
\begin{eqnarray}\label{What}
{\widehat W}_t:=\int W(t,x)\nu_X(\{t\}, dx),\quad a_t:=\widehat{1}_t=\nu_X(\{t\},\mathbb R^d).
\end{eqnarray}
The quadruplet
\begin{eqnarray}\label{PCharac4X}
(b^X,c^X, F^X, A^X)\ \mbox{are the predictable characteristics of}\ (X,\mathbb H).\end{eqnarray}

In the rest of paper, except the appendix, $(X, \mathbb H)\in\{(S, \mathbb F),(S^{\tau},\mathbb G)\}$. Thus, throughout the rest of the paper, for the sake of simplicity, the random measure $\mu_S$ associated with the jumps of $S$ will be denoted by $\mu$, while $S^c$ denotes the continuous $\mathbb F$-local martingale
part of $S$, and the quadruplet
\begin{eqnarray*}\label{FpredictCharac}
\left(b,c,F, A\right)\ \mbox{are the predictable characteristics of}\ (S,\mathbb F).\end{eqnarray*}
Or equivalently {\it the
 canonical decomposition} of $S$ (see Theorem 2.34, Section II.2 of \cite{JS03} for details) is given by 
\begin{equation}\label{modelSbis}
S=S_0+S^c+h\star(\mu-\nu)+b\is A+(x-h)\star\mu,\quad h(x):=xI_{\{ \vert x\vert\leq 1\}}.\end{equation}
Throughout the rest of this section, we consider Jacod's decomposition for the $\mathbb F$-martingale $G_{-}^{-1}\is m$, see Theorem \ref{tmgviacharacteristics} for details, \begin{eqnarray}\label{Jacod4m}
G_{-}^{-1}\is {m}&&= \beta \is S^c +U^{(m)} \star (\mu - \nu) + g_m\star \mu + m^{\perp},\\
&& U^{(m)}:=f_m-1+{{\widehat{f_m}-a}\over{1-a}}I_{\{a<1\}}.\nonumber\end{eqnarray}
We end this subsection by defining the space ${\cal L}(S,\mathbb F)$ and the function ${\cal K}_{log}$, that will be used throughout the paper, as follows.
\begin{eqnarray}
{\cal L}(S,\mathbb F)&&:=\Bigl\{\theta\in{\cal P}(\mathbb F)\ \big|\ 1+x^{tr}\theta_t(\omega)>0\quad P\otimes dA\otimes F\mbox{-.a.e.}\Bigr\},\label{SelL(F)}\\
{\cal K}_{log}(y)&&:={{ -y}\over{1 +y }}  
+\ln(1 +y)\quad \mbox{for any}\quad y>-1.\label{Kfunction}
\end{eqnarray}
 The rest of this section is divided into three subsections. The first subsection states the main results, while their proofs will detailed in the second subsection. The last subsection illustrates the main results on the case when the initial model $(S,\mathbb F)$ follows a jump-diffusion model.
\subsection{Main results and their applications and financial interpretations}\label{Subsection4Results}
This subsection describes explicitly log-optimal portfolio for $(S^{\tau},\mathbb G)$ using the model's data seen through $\mathbb F$ only, and discusses their applications afterwards.
\begin{theorem}\label{optimalportfoliogen} Suppose $G>0$, and let ${\cal K}_{log}$ be the function given by (\ref{Kfunction}). Then the following three assertions are equivalent.\\
{\rm{(a)}} Log-optimal portfolio for $(S^{\tau}, \mathbb G)$, denoted by ${\widetilde\theta}^{\mathbb G}$, exists. \\
{\rm{(b)}} There exists $\widetilde\varphi\in {\cal L}(S,\mathbb F)$ such that, for any $\theta\in{\cal L}(S,\mathbb F)$, the following hold
\begin{eqnarray}
&&\hskip -0.55cm(\theta-\widetilde\varphi)^{tr}(b +c(\beta-\widetilde\varphi)) + \int(\theta-\widetilde\varphi)^{tr}\left({{f_m(x) }\over{1 + \widetilde\varphi^{tr} x }} x-h(x)\right) F(dx)\leq 0,\label{Cond4ptimalityG}\\
 &&\hskip -0.55cmE\left[(G_{-}\is \widetilde V)_T+(G_{-}\widetilde\varphi^{tr}c\widetilde\varphi\is A)_T+(G_{-}{\cal K}_{log}(\widetilde\varphi^{tr}x) f^{(m)}\star\nu)_T\right]<+\infty,\label{Cond4integrabilityG}\\
  &&\hskip -0.55cm\widetilde{V}:=\left(\widetilde\varphi^{tr} b +\widetilde\varphi^{tr}c(\beta-\widetilde\varphi)\right)\is A  + \left({{f_m(x)\widetilde\varphi^{tr}x}\over{1 + \widetilde\varphi^{tr} x }}  -\widetilde\varphi^{tr}h(x)\right)\star\nu.\label{VtildeG} \end{eqnarray}
 {\rm{(c)}} Num\'eraire portfolio for $(S^{\tau},\mathbb G)$ exists, and its rate $\widetilde\varphi^{\mathbb G}$ satisfies (\ref{Cond4integrabilityG}). \\
  Furthermore, the processes ${\widetilde\theta}^{\mathbb G}$,  $\widetilde{\varphi}$,  $\widetilde\varphi^{\mathbb G}$ , ${\widetilde Z}^{\mathbb G}$ solution to (\ref{dualproblem}), and ${\widetilde Z}^{\mathbb F}\in {\cal D}(S,\mathbb F)$ described via (\ref{MinimiDualEquivLog}) are related to each other by the following  
\begin{eqnarray}
&&{\widetilde\theta}^{\mathbb G}\left(1+({\widetilde\theta}^{\mathbb G}\is S^{\tau})_{-}\right)^{-1}=\widetilde{\varphi}= \widetilde\varphi^{\mathbb G}\quad \mbox{on}\quad \Lbrack0,\tau\Lbrack,\label{LogOPrate4G}\\
&&\hskip -0.45cm {1\over{{\cal E}(\widetilde\varphi\is S)^{\tau}}}={\widetilde Z}^{\mathbb G}:={\cal E}({\widetilde K}^{\mathbb G}){\cal E}(-\widetilde{V}^{\tau})={{{\cal E}({\widetilde K}^{\mathbb F})^{\tau}{\cal E}(-\widetilde{V}^{\tau})}\over{{\cal E}(G_{-}^{-1}\is m)^{\tau}}}=:{{({\widetilde Z}^{\mathbb F})^{\tau}}\over{{\cal E}(G_{-}^{-1}\is m)^{\tau}}} ,\label{OptimalDualLoggeneral}\\
 &&{\widetilde K}^{\mathbb F}:=-\widetilde\varphi\is S^c+ {{\widetilde\Gamma}\over{G_{-}}}\is m -{{{\widetilde\Gamma}(\widetilde\varphi^{tr}x) f_m}\over{1+\widetilde\varphi^{tr}x}}\star(\mu-\nu)-{{{\widetilde\Gamma}(\widetilde\varphi^{tr}x)g_m}\over{1+\widetilde\varphi^{tr}x}}\star\mu.\hskip 1cm\label{K-F}\\
 &&{\widetilde K}^{\mathbb G}:=-{\widetilde\varphi}\is {\cal T}(S^c)+{{-{\widetilde\Gamma}{\widetilde\varphi}^{tr}x}\over{1+{\widetilde\varphi}^{tr}x}}I_{\Lbrack0,\tau\Lbrack}\star(\mu-f_m\cdot\nu)\label{K-G} \\
  &&\hskip -0.55cm\widetilde\Gamma:=\left(1-\widehat{f_m}+\widehat{f_m f^{(op)}}\right)^{-1},\quad f^{(op)}(x):=\left(1+\widetilde\varphi^{tr}x\right)^{-1}.\label{GammaTilde}\end{eqnarray}
\end{theorem}
Thanks to \cite[Theorems 2.17 and 2.20]{ChoulliDavelooseVanmaele}, the uncertainty in $\tau$ bears three types of orthogonal risks. These risks are the correlation risks whose generator is the $\mathbb F$-martingale $m$ (or equivalently ${\cal E}(G_{-}^{-1}\is m)$), the pure default risk of type one generated by $N^{\mathbb G}$ defined in (\ref{processNG}), and the second type of pure default risk that takes the form of $k\is D$ where $k$ spans the space $L^1_{loc}(\widetilde\Omega, \mbox{Prog}(\mathbb F), P\otimes D)$ and satisfies $E[k_{\tau}I_{\{\tau<+\infty\}}\big|{\cal F}_{\tau}]=0$ $P$-a.s..\\
In virtue of Theorems \ref{NumeraireGeneral} and \ref{LogOPexistence}, it is clear that the correlation risk is the only risk that impacts num\'eraire and log-optimal portfolios. Theorem 5.1 gives a more ``chirurgical" and deep description of the impact of $\tau$ on the log-optimal portfolio. To be more precise in our analysis, we appeal to (\ref{Jacod4m}), and write
\begin{eqnarray}\label{Decomposition4m}
{\cal E}(G_{-}^{-1}\is m)=\underbrace{{\cal E}\Biggl(\beta\is S^c+U^{(m)}\star(\mu-\nu)\Biggr)}_{\mbox{$S$-correlation source}}\times\underbrace{{\cal E}\Biggl( (g_m/f_m)\star\mu+m^{\perp}\Biggr)}_{\mbox{$S$-non-correlation source}}.\hskip 0.5cm
  \end{eqnarray}
  This decomposes the source of correlation risks between $\tau$ and $\mathbb F$ into the product of $S$-correlation and $S$-non-correlation risk generators, which leads naturally to the following definition.
  \begin{definition} Let $X$ be an $\mathbb F$-semimartingale.\\
  $\tau$ is ``{\it non-correlated}" to $X$ if $[m, M^X]\in {\cal M}_{loc}(\mathbb F)$ for any $M^X\in {\cal M}_{loc}(\mathbb F)$ generated by $X$. I.e.  it is of the form of $M^X=\varphi\is X^c+W\star(\mu^X-\nu^X)$, for $\mathbb F$-predictable  $\varphi$ and $W$ fulfilling the integrability conditions of Theorem \ref{tmgviacharacteristics}.
  \end{definition}
This defines a large class of random times that covers the class of pseudo-stopping times, and it possesses similar ``nice" properties as them. In fact, it is easy to prove that $\tau$ is an $\mathbb F$-pseudo-stopping time if and only if it is non-correlated to any bounded $\mathbb F$-martingale. Furthermore, in virtue of (\ref{Jacod4m}), it is clear that $\tau$ is non-correlated to $S$ if and only if $(\beta,f_m)\equiv (0,1)$ $P\otimes A\otimes F$-a.e.. Thus, being non-correlated to $S$ is a much weaker assumption than being a pseudo-stopping time. Furthermore, our notion of ``non-correlation" in Definition \ref{Decomposition4m} allows us to go deeper than Corollaries \ref{ParticularCases} and \ref{pseudo} in describing a larger class of random times $\tau$ that do not affect num\'eraire and log-optimal portfolios when passing from $(S, \mathbb F)$ to $(S^{\tau},\mathbb G)$.  This is the aim of the following.
\begin{proposition}\label{Corollary5.3} Suppose ${\cal D}(S,\mathbb F)\not=\emptyset$, and let  $T\in (0,+\infty)$. Then the following assertions hold.\\
{\rm{(a)}} If $\tau$ is non-correlated to $S$, then num\'eraire porfolios for $(S^{\tau},\mathbb G)$ and $(S, \mathbb F)$ coincide on $\Rbrack0,\tau\Lbrack$.\\
{\rm{(b)}} Suppose that $\tau$ is non-correlated to $S$. If log-optimal portfolio for $(S,\mathbb F)$ exists, then log-optimal portfolio for $(S^{\tau\wedge T}, \mathbb G)$ does exist also, and both portfolios coincide on $\Rbrack0,T\wedge\tau\Lbrack$.\\
{\rm{(c)}} Suppose that $S$ is continuous and denote by $\widetilde\lambda$ the num\'eraire portfolio rate for $(S,\mathbb F)$. Then the following properties hold.\\ 
{\rm{(c.1)}}  Num\'eraire portfolio rate for $(S^{\tau},\mathbb G)$ --denoted by $\widetilde\varphi$-- exists and is given by $\widetilde\varphi=\widetilde\lambda+\beta$.\\
{\rm{(c.2)}} $\widetilde\varphi$ is log-optimal portfolio rate for $(S^{T\wedge\tau},\mathbb G)$ if and only if
$$E\left[\int_0^T G_{s-}{\widetilde\varphi}_s^{tr}c_s{\widetilde\varphi}_s dA_s\right]<+\infty.$$
{\rm{(c.3)}} If $E\left[\int_0^T{\widetilde\lambda}_s^{tr}c_s{\widetilde\lambda}_s dA_s\right]<+\infty$, then log-optimal portfolio rate for $(S^{T\wedge\tau},\mathbb G)$  exists if and only if 
$$E\left[\int_0^T {1\over{G_{s-}}}d\langle m^c, m^c\rangle_s\right]=E\left[\int_0^T G_{s-}{\beta}_s^{tr}c_s{\beta}_s dA_s\right]<+\infty.$$
\end{proposition}
We relegate the proof this corollary to Subsection \ref{Subsection4Proofs}, for the sake of simple exposition, while herein we focus on discussing the meaning of its results and its connection to the literature. Both assertions (a) and (b) explain that the only risk that affects the structures and/or the existence of num\'eraire and log-optimal portfolios is the risk generated by the correlation between $\tau$ and $S$. Assertion (c) considers the case of continuous $S$, which is considered in the majority of (or all) the insider information literature, see \cite{amendingerimkellerschweizer98,ADImkeller,AImkeller,JImkellerKN,GrorudPontier,pikovskykaratzas96,kohatsusulem06}. Thus, assertion (c) shows what will become in our progressive setting those results of the insider-information framework using {\it information drifts}.\\
The converse of assertion (b) is not true in general. In other words,  in some cases, $\tau$ might have a positive impact on $(S, \mathbb F)$ instead. Indeed, for model $(S, \mathbb F)$ admitting num\'eraire portfolio with rate $\widetilde\lambda$ which is {\it locally} a log-optimal portfolio but {\it not globally}, we can always find models for $\tau$ such that $(S^{T\wedge\tau}, \mathbb G)$ admits log-optimal portfolio, as soon as the probability space $(\Omega, {\cal F}, P)$ is rich enough. This claim can be also viewed as a complement result to Theorem \ref{thoeremAppl1}, as it addresses a case when log-optimal portfolio for $(S,\mathbb F)$ might fail to exist.
\begin{theorem}\label{ConverseTheorem} Suppose that $S$ is quasi-left-continuous (i.e. it does not jump on predictable stopping times)  and $(\Omega, {\cal F}, P)$ supports an exponentially distributed random variable  $\xi$  independent of ${\cal F}_{\infty}:=\sigma(\cup_{s\geq 0}{\cal F}_s)$ and $E[\xi]=1$.  \\
If $\left({\cal D}_{log}(S, \mathbb F)\right)_{loc}\not=\emptyset$, then there exists a random time $\tau$ having a positive Az\'ema supermartingale and $(S^{\tau}, \mathbb G)$ admits log-optimal portfolio. Furthermore, this portfolio coincides with num\'eraire portfolio of $(S,\mathbb F)$ on $\Lbrack0,\tau\Lbrack$. \end{theorem}
It is clear that when num\'eraire portfolio rate $\widetilde\lambda$ exists (see Theorem \ref{LemmaCrucial}), 
\begin {eqnarray}\label{Psi-tilde}
\widetilde\Psi:=\left({\widetilde\lambda}^{tr}b-{1\over{2}}{\widetilde\lambda}^{tr}c\widetilde\lambda\right)\is A +\left(\ln(1 + \widetilde\lambda^{tr} x) -{\widetilde\lambda}^{tr}h\right)\star\nu
\end{eqnarray}
is a well defined, predictable, and nondecreasing process with values in $[0,+\infty]$.
Due to Theorem \ref{LemmaCrucial}, the assertions below are equivalent to $\left({\cal D}_{log}(S, \mathbb F)\right)_{loc}\not=\emptyset$:\\
i) Log-optimal portfolio for $(S,\mathbb F)$ exists locally (see \cite{ChoulliDengMa} for this concept). This means that there exists a sequence of stopping times $(T_n)_n$ that increases to infinity such that  each $(S^{T_n},\mathbb F)$ admits log-optimal portfolio.\\
ii) Num\'eraire portfolio exists with rate $\widetilde\lambda$ and the nondecreasing process ${\widetilde\Psi}$ defined by (\ref{Psi-tilde}) has finite values (i.e. ${\widetilde\Psi}_t<+\infty$ $P$-a.s. for all $t\in[0,+\infty)$).\\
We think that the claim of Theorem \ref{ConverseTheorem} remains valid under the weaker assumption ${\cal D}(S, \mathbb F)\not=\emptyset$. I.e. we believe that the following {\bf conjecture} holds:\\ {\it If $(S,\mathbb F)$ admits num\'eraire portfolio and $(\Omega, {\cal F}, P)$ is rich enough as in Theorem \ref{ConverseTheorem}, then there exists $\tau$ such that $(S^{\tau},\mathbb G)$ admits log-optimal portfolio}. 
\begin{proof}{\it of Theorem \ref{ConverseTheorem}} 
Consider a nonnegative and $\mathbb F$-predictable process $\Phi$ such that $\Phi_0=0$ and the process $\mbox{ess}\sup_{0<s\leq \cdot}\Phi_s$ has finite values, and put  
\begin{eqnarray}
\tau:=\inf\left\{t\geq 0\ \Bigg|\quad{\Phi}_t\geq \xi\right\}\quad\mbox{and}\quad  {\widetilde\Phi}_t:=\mbox{ess}\sup_{0<s\leq t}\Phi_s.\label{tauDefinition}
\end{eqnarray}
Then thanks to the independence between ${\cal F}_{\infty}$ and $\xi$, we calculate the triplet  $(G, \widetilde G, G_{-})$ associated to $\tau$ as follows.
\begin{eqnarray*}
G_t&&:=P(\tau>t\big|{\cal F}_t)=P(\xi>{\widetilde\Phi}_t\big|{\cal F}_t)=\exp(-{\widetilde\Phi}_t)>0,\\
 {\widetilde G}_t&&:=P(\tau\geq t\big|{\cal F}_t)=P(\xi\geq {\widetilde\Phi}_{t-}\big|{\cal F}_t)=\exp(-{\widetilde\Phi}_{t-})=G_{t-}
.\end{eqnarray*}
This proves that, in our current case, we have  the immersion case as $G$ is nondecreasing process. Hence $m\equiv m_0$, and by applying Theorem \ref{optimalportfoliogen} to this case and letting $T$ goes to $+\infty$, we deduce that num\'eraire portfolio for $(S^{\tau}, \mathbb G)$ exists and its rate coincides with $\widetilde\lambda$ on $\Lbrack0,\tau\Lbrack$. Thus, in order to complete the proof of the theorem we need to prove that  $\widetilde\lambda$ fulfills (\ref{Cond4integrabilityG}). To this end, in virtue of the above discussion (paragraph after the theorem), we remark that ${\widetilde\Psi}\in {\cal A}^+_{loc}.$
Hence, by choosing $\Phi={\widetilde\Psi}$ we deduce that $\widetilde\Phi={\widetilde\Psi}$ and due to $f_m\equiv 1$, $\beta\equiv 0$ and the continuity of $A$ --which is implied by the quasi-left-continuity of $S$--, the process $\widetilde V$ of (\ref{VtildeG}) satisfies  $\widetilde V+{1\over{2}}\widetilde\lambda^{tr}c{\widetilde\lambda}\is A+{\cal K}_{log}(\widetilde\lambda^{tr}x)\star\nu={\widetilde\Psi}$ and  
\begin{eqnarray*}
E\left[\int_0^{\infty} G_{s-}d{\widetilde\Psi}_s\right]=E\left[\int_0^{\infty} {\cal E}_{s-}(-{\widetilde\Psi})d{\widetilde\Psi}_s\right]=E\left[1-{\cal E}_{\infty}(- {\widetilde\Psi})\right]\leq 1.
\end{eqnarray*}
This proves that $\widetilde\lambda I_{\Lbrack0.\tau\Lbrack}$ is the log-optimal portfolio rate for $(S^{\tau},\mathbb G)$ and the proof of theorem is completed.\qed
\end{proof}
The following theorem answers (\ref{Q5}) by evaluating $u_T(S^{\tau}, \mathbb G)-u_T(S, \mathbb F)$ and singling out its important parts that we analyze  afterwards. 

\begin{theorem}\label{Difference4u}Suppose $G>0$, the log-optimal portfolio rate $\widetilde\lambda$  for $(S, \mathbb F)$ exists, and (\ref{EntropyCondition}) holds.  Then there exists $\widetilde\varphi\in {\cal L}(S, \mathbb F)$ satisfying (\ref{Cond4ptimalityG}),
and
\begin{eqnarray}
&&\Delta_T(S,\tau,\mathbb F):=u_T(S^{\tau}, \mathbb G)-u_T(S, \mathbb F)\nonumber\\
&&=-\underbrace{E\left[-({\widetilde G}\is {\widetilde{\cal H}}(\mathbb G))_T+({\widetilde G}\is {\widetilde{\cal H}}(\mathbb F))_T+\langle {\widetilde K}^{\mathbb F}- {\widetilde L}^{\mathbb F},  m\rangle^{\mathbb F}_T\right]}_{\mbox{correlation-risk}}\label{Delta(S,tau)1}\hskip 1cm\\
\nonumber\\
&&-\underbrace{E\left[\left((1-\widetilde G)\is{\widetilde{\cal H}}(\mathbb F)\right)_T\right]}_{\mbox{cost-of-leaving-earlier}}-\underbrace{E\left[\langle {\widetilde L}^{\mathbb F},m\rangle^{\mathbb F}_T\right]}_{\mbox{NP($\mathbb F$)-correlation}}+\underbrace{ {\cal H}_{\mathbb G}\left(P\big|{\widetilde Q}_T\right)}_{\mbox{information-premium}},\nonumber\\
&&=-\underbrace{E\left[\left((1-\widetilde G)\is{\widetilde{\cal H}}(\mathbb F)\right)_T\right]}_{\mbox{cost-of-leaving-earlier}}-\underbrace{E\left[\langle {\widetilde L}^{\mathbb F},m\rangle^{\mathbb F}_T\right]}_{\mbox{NP($\mathbb F$)-correl.}}+\underbrace{E\left[\int_0^T{\widetilde {\cal R}}_tG_{t-}dA_t\right]}_{\mbox{num\'eraire-change-premium}}\label{Delta(S,tau)2}\hskip 0.25cm
\end{eqnarray}
Furthermore, the ``correlation-risk", the ``cost-of-leaving-earlier", the ``information-premium", and the ``num\'eraire-change-premium" are nonnegative quantities.\\  
Here, ${\widetilde Q}_T$, $\widetilde V$ and $ {\widetilde K}^{\mathbb F}$ are given by (\ref{Qtilde}), (\ref{VtildeG}) and (\ref{K-F}) respectively, and  the processes ${\widetilde {\cal R}}$, ${\widetilde{\cal H}}(\mathbb G)$ and ${\widetilde{\cal H}}(\mathbb F)$ are given by 
\begin{eqnarray}
{\widetilde {\cal R}}_t&&:=({\widetilde\varphi}_t-{\widetilde\lambda}_t)^{tr}b_t+{\widetilde\varphi}^{tr}_t c_t(\beta_t-{1\over{2}}{\widetilde\varphi}_t)-{\widetilde\lambda}^{tr}_t c_t(\beta_t-{1\over{2}}{\widetilde\lambda}_t) \label{NumeraireChanegPremuim}\\
&&+\int \left(f_m(t,x)\ln((1+{\widetilde\varphi}^{tr}_t x)(1+{\widetilde\lambda}^{tr}_t x)^{-1})-({\widetilde\varphi}_t-{\widetilde\lambda}_t)^{tr} h(x)\right)F_t(dx) \nonumber\\
{\widetilde{\cal H}}(\mathbb G)&&:=\widetilde  V+\sum(-\Delta{\widetilde V}-\ln(1-\Delta{\widetilde V}))+H^{(0)}({\widetilde K}^{\mathbb F},P),\label{Htilde(G)}\\
{\widetilde{\cal H}}(\mathbb F)&&:={\widetilde  V}^{\mathbb F}+\sum(-\Delta{\widetilde  V}^{\mathbb F}-\ln(1-\Delta{\widetilde  V}^{\mathbb F}))+H^{(0)}({\widetilde L}^{\mathbb F},P),\label{Htilde(F)}
\end{eqnarray}
where ${\widetilde L}^{\mathbb F}$  and ${\widetilde  V}^{\mathbb F}$  are defined by
\begin{eqnarray}
&&{\widetilde L}^{\mathbb F}:=-{\widetilde\lambda}\is S^c-{{{\widetilde\Xi}{\widetilde\lambda}^{tr}x}\over{1+{\widetilde\lambda}^{tr}x}}\star(\mu-\nu),\ {\widetilde\Xi}_t^{-1}:=1-a_t+\int{{\nu(\{t\},dx)}\over{1+{\widetilde\lambda_t}^{tr}x}},\hskip 1cm\label{Ltilde}\\
&&{\widetilde V}^{\mathbb F}:=\left({\widetilde\lambda}^{tr}b-{1\over{2}}{\widetilde\lambda}^{tr}c\widetilde\lambda\right)\is A +\left({{\widetilde\lambda^{tr} x}\over{1 + \widetilde\lambda^{tr} x}} -{\widetilde\lambda}^{tr}h\right)\star\nu.\label{Utilde}
\end{eqnarray}
\end{theorem}
Our theorem measures the quantity IEU$_{log}(S,\tau,\mathbb F)$, denoted by $\Delta_T(S,\tau,\mathbb F)$ and defined in (\ref{Delta(S, Tau)}), in different manners by quantifying deeply the various factors that directly affect log-optimal portfolio when passing from $(S, \mathbb F)$ to the stopped model $(S^{\tau}, \mathbb G)$. Two of these factors were {\it intuitively} known and understood, while herein we quantify them with a sharpe precision using the model's data that is observable in $\mathbb F$ no matter how general the model $(S, \mathbb F, \tau)$ is. In fact, we know intuitively that the agent endowed with the flow $\mathbb G$ has the information advantage of seeing the occurrence of $\tau$, and this {it naturally} should generate to her some premium that we quantified and called ``information-premium".  Simultaneously to this advantageous information, our $\mathbb G$-agent might face two types of risks due to the stochasticity of $\tau$. The first risk comes from the length of the random horizon, which might lead to the cost-of-leaving earlier, as the horizon $T\wedge\tau$ is shorter in general than T with positive probability. This fact is also well known for the case when $\tau$ is an $\mathbb F$-stopping time, a fortiori when it is fixed/deterministic horizon, due to the myopic feature of the logarithmic utility. The second risk, that our $\mathbb G$-investor might face, is essentially intrinsic to the correlation between $\tau$ and $S$.  We baptize this risk as ``correlation-risk"  and we precisely quantify it in terms of the  $\mathbb F$-observable model's data. This third factor appears naturally  in our analysis and was not intuitively clear, in contrast to the two previous factors. \\
The fourth factor, which appears in both expression (\ref{Delta(S,tau)1}) and (\ref{Delta(S,tau)2}), represents the correlation between $\tau$ and num\'eraire portfolio of $(S, \mathbb F)$. We naturally quantify this latter factor with the expression $E[\langle {\widetilde L}^{\mathbb F}, m\rangle^{\mathbb F}_T]$, as $ {\widetilde L}^{\mathbb F}$ is the main randomness-source of the optimal-deflator dual to num\'eraire portfolio.\\
By comparing (\ref{Delta(S,tau)1}) and (\ref{Delta(S,tau)2}), we obviously deduce that
\begin{eqnarray*}
\mbox{\it num\'eraire-change-premium}= \left(\mbox{\it information-premium}\right)-\left(\mbox{\it correlation-risk}\right)\geq 0.\end{eqnarray*}
This shows that the ``correlation-risk" liability does not exceed the ``information-premium", and hence it diminishes the informational advantage of our $\mathbb G$-investor without entailing a loss. This ``num\'eraire-change-premium" is null if and only if both num\'eraire portfolios $\widetilde\lambda$ and $\widetilde\varphi$ are equal in some sense.  This fact and further discussions on these factors are summarized in the following.\\
\begin{theorem}\label{Proposition4DifferenceU} Suppose that assumptions of Theorem \ref{Difference4u} hold. Then the following assertions hold.\\
{\rm{(a)}} The correlation-risk is null and both num\'eraire portfolios for $(S,\mathbb F)$ and $(S^{\tau},\mathbb G)$ coincide on $\Rbrack0,\tau\Lbrack$ if and only if the information-premium is null  if and only if $\tau$ is a pseudo-stopping time (i.e. $E[M_{\tau}]=E[M_0]$ for any bounded $\mathbb F$-martingale). In this case, the NP$(\mathbb F)$-correlation is also null, and 
\begin{eqnarray}\label{Case1}
\Delta_T(S,\tau,\mathbb F)=-\left(\mbox{cost-of-leaving-earlier}\right)\leq 0.\end{eqnarray}
{\rm{(b)}}  Suppose that $\tau$ is non-correlated to $S$. Then (\ref{Case1}) holds, the correlation-risk coincide with the information-premium, and the NP$(\mathbb F)$-correlation is null.
{\rm{(c)}} The ``num\'eraire-change-premium" is null iff the num\'eraire portfolio rates $\widetilde\lambda$ and  ${\widetilde\varphi}^{\mathbb G}$, for $(S, \mathbb F)$ and $(S^{\tau},\mathbb G)$ respectively, coincide on $\Lbrack0,\tau\Lbrack$, i.e. $\widetilde\lambda\is S^{\tau}={\widetilde\varphi}^{\mathbb G}\is S^{\tau}$ or on  $\Lbrack0,\tau\Lbrack$ $P\otimes A\otimes F$-a.e. $c\widetilde\lambda=c{\widetilde\varphi}^{\mathbb G}$, $b^{tr}\widetilde\lambda= b^{tr}{\widetilde\varphi}^{\mathbb G} $  and $x^{tr}\widetilde\lambda=x^{tr}{\widetilde\varphi}^{\mathbb G}$. \\
{\rm{(d)}} If $S$ is a local martingale, then 
\begin{eqnarray}\label{Case3}
\Delta_T(S,\tau,\mathbb F)=\mbox{num\'eraire-change-premium}\geq 0.\end{eqnarray}
{\rm{(e)}} Suppose $S$ is continuous. Then the pair $(\widetilde\lambda, \widetilde\varphi)$ defined in Theorem \ref{Difference4u} satisfies $\widetilde\lambda+\beta=\widetilde\varphi$ and the following hold.
\begin{eqnarray}
&&\mbox{information-premium}=E\left[\int_0^T{{G_{s-}}\over{2}}\beta^{tr}_s c_s\beta_s dA_s\right]+\mbox{correlation-risk},\hskip 0.75cm\label{Case4}\\
&&\mbox{correlation-risk}=E\left[\int_0^T G_{s-}dh^{(E)}_s(m^{\perp},\mathbb F)\right],\label{Case5}\\
&&\mbox{cost-of-leaving-earlier}=E\left[\int_0^T{{1-G_{s-}}\over{2}}{\widetilde\lambda}^{tr}_s c_s {\widetilde\lambda}_s dA_s\right].\label{Case6}
\end{eqnarray}
\end{theorem}
\subsection{Proofs of Theorems \ref{optimalportfoliogen}, \ref{Difference4u} and \ref{Proposition4DifferenceU} and Proposition \ref{Corollary5.3}}\label{Subsection4Proofs}
This subsection focuses on proving the  results of the previous subsection, and is divided into four sub-subsections.
\subsubsection{Proof of Theorem \ref{optimalportfoliogen}}
To prove this theorem, we start by deriving the predictable characteristics for $(S^{\tau},\mathbb G)$, that we denote by $(b^{\mathbb G}, c^{\mathbb G}, F^{\mathbb G}, A^{\mathbb G})$,  as follows.
\begin{eqnarray}\label{Charac4G}
&& b^{\mathbb G}:=b+c\beta+\int h(x)(f_m(x)-1)F(dx),\ \mu^{\mathbb G}:=I_{\Lbrack0,\tau\Lbrack}\star\mu,\ c^{\mathbb G}:=c\nonumber\\
&& d\nu^{\mathbb G}:=I_{\Lbrack0,\tau\Lbrack}f_{m}d\nu,\quad F^{\mathbb G}(dx):=I_{\Lbrack0,\tau\Lbrack}f_{m}(x)F(dx),\quad A^{\mathbb G}:=A^{\tau}.
\end{eqnarray}
Thus, by directly  applying Theorem \ref{LemmaCrucial}  to the model $(S^{\tau},\mathbb G)$, we deduce the equivalences between the existence of the log-optimal portfolio $\widetilde\theta^{\mathbb G}$ for the model, the existence of $\varphi\in {\cal L}(S^{\tau},\mathbb G)$ satisfying 
\begin{eqnarray}
&&(\theta-\varphi)^{tr}(b^{\mathbb G}-c^{\mathbb G}\varphi)+ \int \left( {{(\theta-\varphi)^{tr}x}\over{1+{\varphi}^{tr}x}}-(\theta-\varphi)^{tr}h(x)\right)F^{\mathbb G}(dx)\leq 0,\label{C3forStau}\hskip 1cm\\
&&E\left[ V^{\mathbb G}_T+{1\over{2}}(\varphi^{tr}c^{\mathbb G}\varphi\is A^{\mathbb G})_T+({\cal K}_{log}(\varphi^{tr}x)\star \nu^{\mathbb G})_T\right]<+\infty,\label{C1forStau}\\
&&  V^{\mathbb G}:=\Big\vert \varphi^{tr}b^{\mathbb G}-\varphi^{tr}c^{\mathbb G}\varphi+\int \left[{{\varphi^{tr}x}\over{1+\varphi^{tr}x}}-\varphi^{tr}h(x)\right] F^{\mathbb G}(dx)\Big\vert\is A^{\mathbb G}\label{VG}
\end{eqnarray}
for any bounded $\theta\in  {\cal L}(S^{\tau},\mathbb G)$, and the existence of num\'eraire portfolio rate $\widetilde\varphi^{\mathbb G}$ satisfying (\ref{C3forStau}). Furthermore, 
\begin{eqnarray}
&&{\widetilde Z}^{\mathbb G}={\cal E}({\widetilde\varphi}^{\mathbb G}\is S^{\tau})^{-1}\in  {\cal D}_{log}(S^{\tau}, \mathbb G),\quad \varphi=\varphi^{\mathbb G},\label{Ztilde1}\\
 &&{\widetilde Z}^{\mathbb G}={\cal E}({\widetilde K}^{\mathbb G}){\cal E}(-\widetilde V^{\tau}),\ {\widetilde K}^{\mathbb G}:=-\varphi\is S^{c,\mathbb G}+{{-{\widetilde\Gamma}^{\mathbb G}\varphi^{tr}x}\over{1+\varphi^{tr}x}}\star(\mu^{\tau}-\nu^{\mathbb G}),\label{Ztilde2}\\
 &&{\widetilde\Gamma}^{\mathbb G}_t:=\left(1+\int (f^{(op,\mathbb G)}_t(x)-1)\nu^{\mathbb G}(\{t\},dx)\right)^{-1}, f^{(op, \mathbb G)}_t(x):={1\over{1+\varphi^{tr}_t x}},\hskip 1cm\label{GammaG}
\end{eqnarray}
Thanks to Lemma \ref{PortfolioGtoF}, we deduce the existence of $\widetilde\varphi\in {\cal L}(S,\mathbb F)\cap L(S,\mathbb F)$ and 
\begin{eqnarray}\label{PhiEquality}
\varphi I_{\Lbrack0,\tau\Lbrack}=\widetilde\varphi I_{\Lbrack0,\tau\Lbrack},\quad P\otimes A\mbox{-a.e.}.\end{eqnarray} Therefore, by inserting this and  (\ref{Charac4G}) in (\ref{VG}) and (\ref{GammaG}), we conclude that 
\begin{eqnarray}\label{Equality4Gamma}
V^{\mathbb G}={\widetilde V}^{\tau},\quad {\widetilde\Gamma}^{\mathbb G}I_{\Lbrack0,\tau\Lbrack}={\widetilde\Gamma}I_{\Lbrack0,\tau\Lbrack},\quad   f^{(op,\mathbb G)}I_{\Lbrack0,\tau\Lbrack}=f^{(op)}I_{\Lbrack0,\tau\Lbrack},\end{eqnarray}
where $ \widetilde\Gamma$ and $f^{(op)}$are given by (\ref{GammaTilde}). Thus, by inserting (\ref{PhiEquality}), (\ref{Equality4Gamma}) and  (\ref{Charac4G}) in (\ref{C3forStau}) and (\ref{C1forStau}) and using Lemma \ref{PortfolioGtoF}-(d), we deduce that both (\ref{Cond4ptimalityG}) and (\ref{Cond4integrabilityG}) hold for $\widetilde\varphi$. Hence, (\ref{LogOPrate4G}), and the equivalence  between assertions (a), (b) and (c) follow immediately. By inserting  (\ref{PhiEquality}) in (\ref{Ztilde1}), we obtain the first equality in (\ref{OptimalDualLoggeneral}), while the rest of the proof focuses in proving the second equality of  (\ref{OptimalDualLoggeneral}). To this end, thanks to Theorem \ref{GeneralDeflators}-(b), we use (\ref{Ztilde2}) after inserting (\ref{PhiEquality}) and (\ref{Equality4Gamma}), and we look for ${\widetilde K}^{\mathbb F}\in {\cal M}_{0,loc}(\mathbb F)$ such that 
\begin{eqnarray}\label{fromKG2KF}
-\widetilde\varphi\is S^{c,\mathbb G}+{{-{\widetilde\Gamma}\widetilde\varphi^{tr}x}\over{1+\widetilde\varphi^{tr}x}}\star(\mu^{\tau}-\nu^{\mathbb G})={\cal T}\left({\widetilde K}^{\mathbb F}-{1\over{G_{-}}}\is m\right).
\end{eqnarray} 
By combining the fact that two local martingales are equal if and only if their continuous martingale parts are equal and their jumps are also equal,  the fact that ${\cal T}(S^c)=S^{(c,\mathbb G)}$ defined in (\ref{Charac4G}), and the fact that 
\begin{eqnarray*}
\Delta {\cal T}(X)={{G_{-}}\over{\widetilde G}}\Delta X I_{\Lbrack0,\tau\Lbrack}\quad\mbox{and}\ \Delta {\widetilde K}^{\mathbb G}=\left({{{\widetilde\Gamma}}\over{1+{\widetilde\varphi}^{tr}\Delta S}}-1\right)I_{\Lbrack0,\tau\Lbrack},\end{eqnarray*}
we deduce that (\ref{fromKG2KF}) is equivalent to 
\begin{eqnarray}
\hskip -0.55cm({\widetilde K}^{\mathbb F})^c=-\widetilde\varphi\is S^{c}+{1\over{G_{-}}}\is m^c,\ {{\widetilde\Gamma}\over{1+\widetilde\varphi^{tr}\Delta S}}I_{\Lbrack0,\tau\Lbrack}={{G_{-}}\over{\widetilde G}}\left(\Delta {\widetilde K}^{\mathbb F}+1\right)I_{\Lbrack0,\tau\Lbrack}.\label{fromKG2KFbis}
\end{eqnarray} 
By taking the $\mathbb F$-optional projection on both sides above, we get
\begin{eqnarray*}
\Delta{\widetilde K}^{\mathbb F}={{{\widetilde G}\widetilde\Gamma}\over{G_{-}(1+\widetilde\varphi^{tr}\Delta S)}}-1.
\end{eqnarray*}
Thanks to ${\widetilde G}=G_{-}(f_m(\Delta S)+g_m(\Delta S))$ on $(\Delta S\not=0)$ and $\Delta m={\widetilde G}-G_{-}$, the above equality is equivalent to
\begin{eqnarray}\label{DeltaKF}
\Delta{\widetilde K}^{\mathbb F}=-{{{\widetilde\Gamma}g_m(\Delta S)\widetilde\varphi^{tr}\Delta S}\over{1+\widetilde\varphi^{tr}\Delta S}}-{{{\widetilde\Gamma}f_m(\Delta S)\widetilde\varphi^{tr}\Delta S}\over{1+\widetilde\varphi^{tr}\Delta S}}+{{{\Delta m}{\widetilde\Gamma}}\over{G_{-}}}+{\widetilde\Gamma}-1.
\end{eqnarray}
Remark that it is not difficult to check that the two processes
\begin{eqnarray*}
 {{{\widetilde\Gamma}g_m\widetilde\varphi^{tr}x}\over{1+\widetilde\varphi^{tr}x}}\star\mu\quad \mbox{and}\quad-{{{\widetilde\Gamma}f_m\widetilde\varphi^{tr}x}\over{1+\widetilde\varphi^{tr}x}}\star(\mu-\nu)\end{eqnarray*}
 are well defined $\mathbb F$-local martingales (see Theorem \ref{tmgviacharacteristics} for the conditions that guarantree their existence), and 
 \begin{eqnarray*}
&& {{{\Delta m}{\widetilde\Gamma}}\over{G_{-}}}=\Delta\left({\widetilde\Gamma}G_{-}^{-1}\is m\right),\ -{{{\widetilde\Gamma}g_m(\Delta S)\widetilde\varphi^{tr}\Delta S}\over{1+\widetilde\varphi^{tr}\Delta S}}=\Delta\left(-{{{\widetilde\Gamma}g_m\widetilde\varphi^{tr}x}\over{1+\widetilde\varphi^{tr}x}}\star\mu\right),\\
&&-{{{\widetilde\Gamma}f_m(\Delta S)\widetilde\varphi^{tr}\Delta S}\over{1+\widetilde\varphi^{tr}\Delta S}}+{\widetilde\Gamma}-1=\Delta\left(-{{{\widetilde\Gamma}f_m\widetilde\varphi^{tr}x}\over{1+\widetilde\varphi^{tr}x}}\star(\mu-\nu)\right).\end{eqnarray*}
Thus, by combining theses facts with $\left({\widetilde\Gamma}G_{-}^{-1}\is m\right)^c=G_{-}^{-1}\is m^c$, the first equality in (\ref{fromKG2KFbis}) and (\ref{DeltaKF}), we deduce that ${\widetilde K}^{\mathbb F}$ is given by (\ref{K-F}), and the proof of the theorem is completed. \qed
 \subsubsection{Proof of Proposition \ref{Corollary5.3}.}
 Remark that $\tau$ is non-correlated to $S$ if and only if 
\begin{eqnarray}\label{Beta0F1}
c\beta=0\quad dP\otimes dA\mbox{-a.e.}\quad f_m(t,x)=1\quad P\otimes dA\otimes F\mbox{-a.e..}\end{eqnarray}
1) Here we prove assertion (a). Then thanks to Theorem \ref{LemmaCrucial} and to the assumption ${\cal D}(S,\mathbb F)\not=\emptyset$, the two num\'eraire portfolios with rates $\widetilde\varphi$ and $\widetilde\lambda$, for $(S^{\tau},\mathbb G)$ and $(S, \mathbb F)$ respectively, solve the following inequality-equation 
\begin{eqnarray}\label{OpmisationInequality}
(\theta-{\widetilde\theta})^{tr}(b-c{\widetilde\theta})+\int (\theta-\widetilde\theta)^{tr}({{x}\over{1+{\widetilde\theta}^{tr}x}}-h(x))F(dx)\leq 0,\ \theta\in {\cal L}(S,\mathbb F).
\hskip 0.75cm\end{eqnarray}
Thus, in virtue of the lemma below, whose proof is relegated to Appendix \ref{Section4Corollary5.3}, 
\begin{lemma}\label{Lemma4Uniqueness}There is at most one $\widetilde\theta\in {\cal L}(S,\mathbb F)$ satisfying (\ref{OpmisationInequality}). Herein, $\psi$ and $\varphi$ --elements of ${\cal L}(S,\mathbb F)$-- are said to be equal if $\psi(1+\vert\psi\vert)^{-1}\is S=\varphi(1+\vert\varphi\vert)^{-1}\is S$.
\end{lemma}
we deduce that both portfolios coincides on $\Lbrack0,\tau\Lbrack$. This ends the proof of assertion (a) of the proposition.\\
2) In virtue of assertion (a) of the proposition, (\ref{Beta0F1}), and Theorem \ref{optimalportfoliogen}-(c), we claim that the log-optimal portfolio for $(S^{\tau},\mathbb G)$ exists if and only if the num\'eraire portfolio rate $\widetilde\lambda$ satisfies 
\begin{eqnarray*}
E\left[G_{-}\is {\widetilde V}^{\mathbb F}_T+G_{-}{\widetilde\lambda}^{tr}c{\widetilde\lambda}\is A_T+G_{-}{\cal K}_{log}({\widetilde\lambda}^{tr}x)\star\nu_T\right]<+\infty,
\end{eqnarray*}
where $ {\widetilde V}^{\mathbb F}$ is given by 
\begin{eqnarray*}
 {\widetilde V}^{\mathbb F}:=\left({\widetilde\lambda}^{tr}b-{\widetilde\lambda}^{tr}c\widetilde\lambda\right)\is A +\left({{{\widetilde\lambda}^{tr}x}\over{1 + \widetilde\lambda^{tr} x }}  -{\widetilde\lambda}^{tr}h\right)\star\nu.\end{eqnarray*}
 Therefore, due to $G_{-}\leq 1$, the above condition is implied by the existence of log-optimal portfolio for $(S, \mathbb F)$, see Theorem \ref{LemmaCrucial} applied to the model $(S,\mathbb F)$. This ends the proof of assertion (b).\\
3) Suppose that $S$ is continuous. Thus, both inequalities-equations for $\widetilde\varphi$ and $\widetilde\lambda$, given by (\ref{Cond4ptimalityG}) and (\ref{C6forX}) applied to  $(S,\mathbb F)$ respectively, become
\begin{eqnarray*}
(\theta-\widetilde\varphi)^{tr}(b+c(\beta-\widetilde\varphi))\leq 0,\quad (\theta-\widetilde\lambda)^{tr}(b-c\widetilde\lambda)\leq 0,\quad\forall\ \theta\in {\cal L}(S,\mathbb F).\end{eqnarray*}By combining this with the fact that any $\mathbb F$-predictable process belongs to ${\cal L}(S,\mathbb F)$, we conclude that $\widetilde\varphi=\beta+\widetilde\lambda$, and the proof of assertion (c.1) is completed. To prove assertion (c.2), it is enough to remark that assertion (c.1) implies that $\widetilde V\equiv 0$ and hence (\ref{Cond4integrabilityG}) becomes $E\left[\int_0^T G_{s-}{\widetilde\varphi}^{tr}_sc_s{\widetilde\varphi}_s dA_s\right]<+\infty$. Assertion (c.3) follows immediately from combining assertions (c.1) and (c.2). This ends the proof of the proposition.\qed
 \subsubsection{Proof of Theorem \ref{Difference4u}}
Direct calculations, see also \cite{ChoulliStricker2007} for similar details, for any $\mathbb H$-local martingale $X$ such that  such that $\Delta X>-1$, we have 
\begin{eqnarray}
-\ln({\cal E}(X))=-X +H^{(0)}(X, \mathbb H).\label{Hellinger4X}\end{eqnarray}
{\bf Part 1.} Here we prove the equality (\ref{Delta(S,tau)2}). To this end, by applying Theorem \ref{LemmaCrucial} to the model $(S,\mathbb F)$ and using Proposition \ref{Hzero4Log(Z)}, we derive
\begin{eqnarray}
&&u_T(S, \mathbb F)=E\left[\ln({\cal E}_T(\widetilde\lambda\is S))\right]=E\left[-\ln({\cal E}_T(-{\widetilde V}^{\mathbb F})\right]+E\left[-\ln({\cal E}_T({\widetilde L}^{\mathbb F}))\right]\nonumber\\
&&=E\left[{\widetilde V}^{\mathbb F}_T+\sum_{0<s\leq T}(-\Delta{\widetilde V}^{\mathbb F}_s-\ln(1-\Delta{\widetilde V}^{\mathbb F}_s))+H^{(0)}_T({\widetilde L}^{\mathbb F}, \mathbb F)\right]\nonumber\\
&&=E\left[{\widetilde{\cal H}}_T({\mathbb F})\right]=E\left[({\widetilde G}\is {\widetilde{\cal H}}({\mathbb F}))_T\right]+E\left[((1-\widetilde G)\is{\widetilde{\cal H}}({\mathbb F}))_T\right].\label{Htilde4F1}\\
&&=E\left[{\widetilde\lambda}^{tr}(b-{1\over{2}}c\widetilde\lambda)\is A_T+\left(\ln(1+{\widetilde\lambda}^{tr}x)-{\widetilde\lambda}^{tr}h\right)\star\nu_T\right].\label{Htilde4F2}
\end{eqnarray}
Thanks to the duality (\ref{OptimalDualLoggeneral}) of Theorem \ref{optimalportfoliogen} and Proposition \ref{Hzero4Log(Z)}, we derive 
\begin{eqnarray}
&&u_T(S^{\tau},\mathbb G)=E\left[\ln({\cal E}_T(\widetilde\varphi\is S^{\tau}))\right]=E\left[-\ln({\widetilde Z}^{\mathbb G}_T)\right]\nonumber\\
&&=E\left[-\ln({\cal E}_{\tau\wedge T}(-\widetilde V))\right]+E\left[-\ln({\cal E}_T({\widetilde K}^{\mathbb G}))\right]\nonumber\\
&&=E\left[{\widetilde V}_{\tau\wedge T}+\sum_{0<s\leq T\wedge\tau}(-\Delta{\widetilde V}_s-\ln(1-\Delta{\widetilde V}_s))+H^{(0)}_T({\widetilde K}^{\mathbb G}, \mathbb G)\right]\label{Equa1theorem5.1}\hskip 0.5cm
\end{eqnarray}
Remark that, in virtue of (\ref{K-G}) and (\ref{GammaTilde}),  we get
\begin{eqnarray*}
\Delta{\widetilde K}^{\mathbb G}=\left({{{\widetilde\Gamma}}\over{1+{\widetilde\varphi}^{tr}\Delta S}}-1\right)I_{\Lbrack0,\tau\Lbrack},
\end{eqnarray*}
and hence by combining this with Definition \ref{Hellinger}, we derive 
\begin{eqnarray*}
&&H^{(0)}({\widetilde K}^{\mathbb G}, \mathbb G)\\
&&={1\over{2}}{\widetilde\varphi}^{tr}c{\widetilde\varphi}\is A^{\tau}+\sum\left({{{\widetilde\Gamma}}\over{1+{\widetilde\varphi}^{tr}\Delta S}}-1-\ln\left({{{\widetilde\Gamma}}\over{1+{\widetilde\varphi}^{tr}\Delta S}}\right)\right)I_{\Lbrack0,\tau\Lbrack}\\
&&={1\over{2}}{\widetilde\varphi}^{tr}c{\widetilde\varphi}\is A^{\tau}+\sum\left({\widetilde\Gamma}-1-\ln({\widetilde\Gamma})\right)I_{\Lbrack0,\tau\Lbrack}+\left({{-{\widetilde\Gamma}{\widetilde\varphi}^{tr}x}\over{1+{\widetilde\varphi}^{tr}x}}+\ln(1+{\widetilde\varphi}^{tr}x)\right)\star\mu^{\tau}.
\end{eqnarray*}
By inserting this in (\ref{Equa1theorem5.1}) and using afterwards (\ref{GammaTilde}), $1-\Delta{\widetilde V}=1/{\widetilde\Gamma}$ and the fact that $(1-{\widetilde\Gamma}){\widetilde\varphi}^{tr}x(1+ {\widetilde\varphi}^{tr}x)^{-1}f_m\star\nu=-\sum(1-\widetilde\Gamma)^2/{\widetilde\Gamma}$, we obtain
\begin{eqnarray}
&&u_T(S^{\tau}, \mathbb G)\nonumber\\
&&=E\left[G_{-}\is\left({\widetilde V}+\sum_{0<s\leq\cdot}(-\Delta{\widetilde V}_s-\ln(1-\Delta{\widetilde V}_s))+\sum_{0<s\leq\cdot}({\widetilde\Gamma}_s-1-\ln({\widetilde\Gamma}_s))\right)_T\right]\nonumber\\
&&+E\left[{{G_{-}}\over{2}}{\widetilde\varphi}^{tr}c{\widetilde\varphi}\is A_T\right]+E\left[G_{-}\left({{-{\widetilde\Gamma}{\widetilde\varphi}^{tr}x}\over{1+{\widetilde\varphi}^{tr}x}}+\ln(1+{\widetilde\varphi}^{tr}x)\right)f_m\star\nu_T\right]\nonumber\\
&&=E\left[(G_{-}\is{\widetilde V})_T+{{G_{-}}\over{2}}{\widetilde\varphi}^{tr}c{\widetilde\varphi}\is A_T+G_{-}\left({{-{\widetilde\varphi}^{tr}x}\over{1+{\widetilde\varphi}^{tr}x}}+\ln(1+{\widetilde\varphi}^{tr}x)\right)f_m\star\nu_T\right]\nonumber\\
&&=E\left[G_{-}{\widetilde\varphi}^{tr}(b+c(\beta- {{\widetilde\varphi}\over{2}}))\is A_T+G_{-}\left(f_m\ln(1+{\widetilde\varphi}^{tr}x)-{\widetilde\varphi}^{tr}h\right)\star\nu_T\right]
\label{Equa2theorem5.1}\end{eqnarray}
The last equality follows from using (\ref{VtildeG}). Denote by ${\widetilde\Phi}$ the functional defined on ${\cal L}(S,\mathbb F)\times{\cal L}(S,\mathbb F)$ as follows.
\begin{eqnarray}\label{PhiTilde}
{\widetilde\Phi}(\lambda,\varphi)&&:=(\varphi-\lambda)^{tr}b+(\varphi-\lambda)^{tr}c\beta-{1\over{2}}\varphi^{tr}c\varphi+{1\over{2}}\lambda^{tr}c\lambda\\
&&+\int\left(f_m(x)\ln\left({{1+\varphi^{tr}x}\over{1+\lambda^{tr}x}}\right)-(\varphi-\lambda)^{tr}h(x)\right)F(dx).\nonumber\end{eqnarray}
Then by using the convexity of both functions of $\varphi$, $\varphi^{tr}c\varphi$ and $-\ln(1+{\varphi}^{tr}x)$ and (\ref{Cond4ptimalityG}), we deduce the nonnegativity of the process ${\widetilde {\cal R}}$, i.e.
\begin{eqnarray}\label{Rpositive}
&&{\widetilde {\cal R}}={\widetilde\Phi}(\widetilde\varphi,\widetilde\lambda)\geq 0,\end{eqnarray}
and in virtue of (\ref{Equa2theorem5.1}), (\ref{Htilde4F1}) and (\ref{Htilde4F2}), we get 
\begin{eqnarray*}
\Delta_T(S, \tau, \mathbb F)&&= u_T(S^{\tau}, \mathbb G)-u_T(S, \mathbb F)\\
&&=-E\left[(1-\widetilde G)\is {\widetilde{\cal H}}(\mathbb F))_T\right]+E\left[G_{-}{\widetilde\Phi}(\widetilde\varphi,\widetilde\lambda)\is A_T\right]\\
&&+E\left[G_{-}{{(f_m-1)({\widetilde\lambda}^{tr}x)}\over{1+{\widetilde\lambda}^{tr}x}}\star\nu_T+G_{-}{\widetilde\lambda}^{tr}c\beta\is A_T\right].\nonumber\end{eqnarray*}
Thus, by combining this latter equality with  
\begin{eqnarray*}
E\left[\langle {\widetilde L}^{\mathbb F}, m\rangle^{\mathbb F}_T\right]=-E\left[G_{-}{{(f_m-1)({\widetilde\lambda}^{tr}x)}\over{1+{\widetilde\lambda}^{tr}x}}\star\nu_T+G_{-}{\widetilde\lambda}^{tr}c\beta\is A_T\right]
\end{eqnarray*}
which follows from direct calculations, we deduce that  (\ref{Delta(S,tau)2}) holds.\\
{\bf Part 2.} Here we prove the equality (\ref{Delta(S,tau)1}). To this end, we apply (\ref{Hellinger4X}) to $({\widetilde K}^{\mathbb F}, \mathbb F)$ and $(G_{-}^{-1}\is m, \mathbb F)$,  and we use the notation (\ref{Htilde(G)}) afterwards to get 
\begin{eqnarray}
&&-\ln\left({\cal E}(-{\widetilde V}^{\tau})\right)-\ln\left({\cal E}({\widetilde K}^{\mathbb F})^{\tau}\right)+\ln\left({\cal E}(G_{-}^{-1}\is m)^{\tau}\right)\nonumber\\
&&={\widetilde V}^{\tau}+\sum\left(-\Delta{\widetilde V}^{\tau}-\ln(1-\Delta{\widetilde V}^{\tau})\right)-({\widetilde K}^{\mathbb F})^{\tau} +H^{(0)}({\widetilde K}^{\mathbb F}, \mathbb F)^{\tau}\nonumber\\
&&+{1\over{G_{-}}}\is m^{\tau}-H^{(0)}\left({1\over{G_{-}}}\is m, \mathbb F\right)^{\tau}\nonumber\\
&&={\widetilde{\cal H}}(\mathbb G)^{\tau}-({\widetilde K}^{\mathbb F})^{\tau}+{1\over{G_{-}}}\is m^{\tau}-H^{(0)}\left({1\over{G_{-}}}\is m, \mathbb F\right)^{\tau}.\label{Log(KG)}
\end{eqnarray}
Thus, by taking the expectation on both sides and using (\ref{OptimalDualLoggeneral}), we derive 
\begin{eqnarray}
&&u_T(S^{\tau}, \mathbb G)=E\left[\ln({\cal E}_T(\widetilde\varphi\is S^{\tau}))\right]=E\left[-\ln({\cal E}_{T\wedge\tau}(-\widetilde V)\right]+E\left[-\ln({\cal E}_T({\widetilde K}^{\mathbb G}))\right]\nonumber\\
&&=E\left[({\widetilde G}\is {\widetilde{\cal H}}({\mathbb G}))_T-\langle{\widetilde K}^{\mathbb F}, m\rangle^{\mathbb F}_T+{1\over{G_{-}}}\is m_{\tau\wedge T}-H^{(0)}\left({1\over{G_{-}}}\is m, \mathbb F\right)_{T\wedge\tau}\right].\label{Equation5.400}
\end{eqnarray}
Therefore, by using (\ref{HellingerE}) and inserting (\ref{equa401}) in (\ref{Equation5.400}), we obtain
\begin{eqnarray*}\label{Equa546}
u_T(S^{\tau},  \mathbb G)= E\left[({\widetilde G}\is {\widetilde{\cal H}}({\mathbb G}))_T-\langle{\widetilde K}^{\mathbb F}, m\rangle^{\mathbb F}_T\right]+ {\cal H}_{\mathbb G}\left(P\big|{\widetilde Q}_T\right)
\end{eqnarray*}
Thus, by combining this with (\ref{Htilde4F1}), (\ref{Delta(S,tau)1}) follows immediately. Hence, in virtue of (\ref{Rpositive}), the proof of the theorem will be complete as soon as we prove
\begin{eqnarray}\label{Htilde(F,G)}
{\cal W}(\mathbb F, \mathbb G):=\left({\widetilde G}\is\left({\widetilde{\cal H}}(\mathbb F)-{\widetilde{\cal H}}(\mathbb G)\right)\right)^{p,\mathbb F}+\langle{\widetilde K}^{\mathbb F}-{\widetilde L}^{\mathbb F}, m\rangle^{\mathbb F}\in {\cal A}^+(\mathbb F).
\end{eqnarray}
On the one hand, similar calculations and arguments as in (\ref{Log(KG)}) applied to 
\begin{eqnarray}\label{ZbarG}
{\overline Z}^{\mathbb G}:={\cal E}({\widetilde L}^{\mathbb F})^{\tau}{\cal E}(-{\widetilde V}^{\mathbb F})^{\tau}/{\cal E}(G_{-}^{-1}\is m)^{\tau},\end{eqnarray}
lead to 
\begin{eqnarray*}
-\ln({\overline Z}^{\mathbb G})={\widetilde{\cal H}}(\mathbb F)^{\tau}-({\widetilde L}^{\mathbb F})^{\tau}+{1\over{G_{-}}}\is m^{\tau}-H^{(0)}\left({1\over{G_{-}}}\is m, \mathbb F\right)^{\tau}.\end{eqnarray*}
Then by combining this equality with (\ref{Log(KG)}), we obtain 
\begin{eqnarray*}
-\ln\left({\widetilde Z}^{\mathbb G}/{\overline Z}^{\mathbb G}\right)={\widetilde{\cal H}}(\mathbb G)^{\tau}-{\widetilde{\cal H}}(\mathbb F)^{\tau}-({\widetilde K}^{\mathbb F})^{\tau}+({\widetilde L}^{\mathbb F})^{\tau}.
\end{eqnarray*}
On the other hand, by using Jenson's inequality and the facts that $1/{\widetilde Z}^{\mathbb G}={\cal E}(\widetilde\varphi\is S^{\tau})$, ${\overline Z}^{\mathbb G}\in {\cal D}(S^{\tau},\mathbb G)$ and both processes ${\widetilde{\cal H}}(\mathbb G)$ and ${\widetilde{\cal H}}(\mathbb F)$ are $\mathbb F$-optional, we deduce that $-\ln\left({\widetilde Z}^{\mathbb G}/{\overline Z}^{\mathbb G}\right)$ is a $\mathbb G$-submartingale, and hence ${\cal W}(\mathbb F, \mathbb G)$ is nondecreasing and $\mathbb F$-predictable
This ends the proof of the theorem.\qed
\subsubsection{Proof of Theorem  \ref{Proposition4DifferenceU}} This proof has four parts, where the four assertions are proved respectively.\\
1) Here we prove assertion (a). It is clear that the ``correlation-risk" is null if and only if the process ${\cal W}(\mathbb F, \mathbb G)$ defined in (\ref{Htilde(F,G)}) is null, or equivalently the two deflators $\widetilde Z^{\mathbb G}$ and ${\overline Z}^{\mathbb G}$ defined in (\ref{ZbarG}) are equal. This is obviously equivalent, due to the uniqueness of the Doob-Meyer decomposition in $\mathbb G$, to  
\begin{eqnarray*}\label{DoobG}
({\widetilde V}^{\mathbb F})^{\tau}={\widetilde V}^{\tau}\quad\mbox{and}\quad {\cal E}({\widetilde K}^{\mathbb F})^{\tau}= {\cal E}({\widetilde L}^{\mathbb F})^{\tau}.
\end{eqnarray*}
Hence, thanks to the assumption $G>0$, the above equalities are equivalent to 
\begin{eqnarray}\label{DoobG}
{\widetilde V}^{\mathbb F}={\widetilde V}\quad\mbox{and}\quad{\widetilde K}^{\mathbb F}={\widetilde L}^{\mathbb F}.
\end{eqnarray}
Then, in virtue of the uniqueness of Jacod's decomposition of Theorem \ref{tmgviacharacteristics}, we conclude that  the ``correlation-risk" is null if and only if 
\begin{eqnarray}\label{JacodG}
c(\beta-\widetilde\varphi)=-c{\widetilde\lambda},\ m^{\perp}\equiv 0,\ g_m=0\ \mbox{and}\ {{f_m}\over{1+{\widetilde\varphi}^{tr}x}}={1\over{1+{\widetilde\lambda}^{tr}x}}\ M^P_{\mu}-a.e.\hskip 0.75cm
\end{eqnarray} 
Therefore, in virtue of this latter equivalence, we deduce that   ``correlation-risk" is null and  $\widetilde\varphi=\widetilde\lambda$ is equivalent to 
\begin{eqnarray*}
c\beta\equiv 0,\  m^{\perp}\equiv 0,\ g_m=0\ \mbox{and}\ f_m=1\ M^P_{\mu}-a.e.\end{eqnarray*}
This is equivalent to $m\equiv m_0$, i.e. $\tau$ is a pseudo-stopping time. This proves the first statement in assertion (a). Furthermore, due to $m\equiv m_0$, we get $\langle{\widetilde L}^{\mathbb F}, m\rangle^{\mathbb F}\equiv 0$ and hence the NP$(\mathbb F)$-correlation factor is null. Thus, by inserting all these in (\ref{Delta(S,tau)1}), we get (\ref{Case1}), and the proof of assertion (a) is completed.\\
2) This part addresses assertion (b). Suppose that $\tau$ is non-correlated to $S$,  which is equivalent to $c\beta=0$ $P\otimes A$-a.e. and $f_m=1$ $P\otimes A\otimes F$-a.e.. Then thanks to Lemma \ref{Lemma4Uniqueness}, we deduce that $\widetilde\lambda=\widetilde\varphi$, and hence we derive 
\begin{eqnarray*}
\ \langle{\widetilde L}^{\mathbb F}, m\rangle^{\mathbb F}\equiv 0\quad \mbox{and}\quad {\widetilde {\cal R}}\equiv 0.\end{eqnarray*}
This implies that both factors of ``NP($\mathbb F)$-correlation" and  ``num\'eraire-change-premium" factor are null, and hence ``information-premium" coincides with "correlation-risk". By inserting all these in (\ref{Delta(S,tau)1}), we obtain again (\ref{Case1}) and the proof of assertion (b) is completed.\\
3) Here we prove assertions (c) and (d). On the one hand, when $\widetilde\lambda$ coincides with $\widetilde\varphi$, then $\widetilde{\cal R}\equiv 0$ follows from (\ref{NumeraireChanegPremuim}). On the other hand, using Taylor's expansion and (\ref{Cond4ptimalityG}) for $\theta=\widetilde\lambda$, we derive 
\begin{eqnarray*}
 {\widetilde{\cal R}}\geq (\widetilde\lambda-\widetilde\varphi)^{tr}c(\widetilde\lambda-\widetilde\varphi)+\int{{((\widetilde\lambda-\widetilde\varphi)^{tr}x)^2}\over{\max((1+\widetilde\lambda^{tr}x)^2, (1+\widetilde\varphi^{tr}x)^2)}}F(dx).
\end{eqnarray*}
Therefore, we deduce that $ {\widetilde{\cal R}}$ is null iff $c\widetilde\lambda=c\widetilde\varphi$ $P\otimes A$-a.e. and $\widetilde\lambda^{tr}x=\widetilde\varphi^{tr}x$ $P\otimes A\otimes F$-a.e.. Therefore, assertion (c) follows from combining this latter claim and the fact that, due to $G>0$ and in virtue of (\ref{Delta(S,tau)2}),   
\begin{eqnarray}\label{Rzero}
\mbox{ the ``num\'eraire-change-premium" is null iff}\quad  {\widetilde{\cal R}}\equiv0\quad P\otimes A\mbox{-.a.e..}\end{eqnarray}
 The rest fo this part proves assertion (d). Suppose $S\in {\cal M}_{loc}(\mathbb F)$. Then we get $\widetilde\lambda\equiv 0$ and ${\widetilde L}^{\mathbb F}\equiv 0$. As a result, we deduce that $u_T(S, \mathbb F)=0$ and $\Delta_T(S, \tau, \mathbb F)=u_T(S^{\tau}, \mathbb G)\geq 0$. This ends the proof of assertion (d). \\
4) Suppose $S$ is continuous. Then thanks to Proposition \ref{Corollary5.3}-(c), we deduce the following equalities
\begin{eqnarray}\label{equalities00}
{\widetilde K}^{\mathbb F}=(\beta-{\widetilde\varphi})\is S^c+m^{\perp}={\widetilde L}^{\mathbb F}+m^{\perp},\ {\widetilde V}^{\mathbb F}={\widetilde V}=0.\end{eqnarray}
Therefore, direct calculations on Hellinger processes, see also \cite{ChoulliStricker2005,ChoulliStricker2006,ChoulliStricker2007} for more details about this fact, we derive 
\begin{eqnarray*}
&&H^{(0)}({\widetilde K}^{\mathbb F}, P)=H^{(0)}({\widetilde L}^{\mathbb F}, P)+H^{(0)}(m^{\perp}, P),\ {\widetilde G}=G_{-}(1+\Delta m^{\perp}),\\
&&-{\widetilde G}\is H^{(0)}(m^{\perp}, P)=G_{-}\is H^{(E)}(m^{\perp}, P)-G_{-}\is [m^{\perp}, m^{\perp}]\\
&&h^{(E)}(G_{-}^{-1}\is m, P)={1\over{2}}\beta^{tr}c\beta\is A+h^{(E)}(m^{\perp}, P)
\end{eqnarray*}
Thus, by combining these equalities with (\ref{Delta(S,tau)1}) and (\ref {equalities00}), assertion (e) follows immediately and the proof of the theorem is completed.  \qed
\subsection{The case when $(S,\mathbb F)$ is a jump-diffusion model}\label{section4JumpDifusionCase}

This subsection illustrates the main results of Section \ref{section3} and Subsection \ref{section4} on the case where the initial model $(S,\mathbb F)$ is a one-dimensional jump-diffusion model. Precisely, we suppose that a standard Brownian motion $W$ and a Poisson process $N$ with intensity $\lambda>0$ are defined on  the probability space $(\Omega, {\cal F}, P)$, the filtration $\mathbb F$ is the completed and right continuous filtration generated by $W$ and $N$. Consider a fixed horizon $T\in (0,+\infty)$, and suppose $S$ satisfies
\begin{equation}\label{SPoisson2}
S_t:=S_0 {\cal E} (X)_t,\   X_t: =(\sigma\is W)_t+(\zeta\is {N}^{\mathbb F})_t + \int_{0}^{t} \mu_s ds,\  {N_t}^{\mathbb F}:=N_t-\lambda t,
\end{equation}
and there exists a constant $\delta\in(0,+\infty)$ such that $\mu$, $\sigma$ and $\zeta$ are bounded $\mathbb F$-predictable processes satisfying
\begin{eqnarray}\label{parameters2}
\zeta>-1,\quad\min( \sigma, \vert\zeta\vert)\geq \delta,\ P\otimes dt\mbox{-a.e.}.
 \end{eqnarray}
Since $m$ is an $\mathbb F$-martingale, then there exists two $\mathbb F$-predictable processes $\varphi^{(m)}$ and $\psi^{(m)}$ such that  $\int_0^T \left((\varphi^{(m)}_s)^2 +\vert \psi^{(m)}_s\vert\right) ds<+\infty\ P\mbox{-a.s.}$, and
\begin{eqnarray}\label{model4tau2}
 G_{-}^{-1}\is m=\varphi^{(m)}\is W+(\psi^{(m)}-1)\is N^{\mathbb F}.\end{eqnarray}

  \begin{theorem}\label{OptimalDeflatorLogJD} Suppose $G>0$, $S$ be given by (\ref{SPoisson2})-(\ref{parameters2}), and consider
 \begin{equation}\label{thetaTilde}
 \widetilde{\theta}:=  \displaystyle\frac{\xi + sign(\zeta) \sqrt{\xi^2 + 4 \lambda \psi^{(m)}} }{2\sigma} - \frac{1}{\zeta},\ \mbox{where}\ \xi:= \frac{\mu-\lambda\zeta}{\sigma} + \varphi^{(m)}+ \frac{\sigma}{\zeta}, \end{equation}
 Then  $ \widetilde{\theta}\in{\cal L}(S, \mathbb F)\cap L(S, \mathbb F)$ is the num\'eraire portfiolio rate for $(S^{\tau}, \mathbb G)$, and the following assertions are equivalent.\\
{\rm{(a)}} The random time $\tau$, parametrized in $\mathbb F$ by $(\varphi^{(m)}, \psi^{(m)},G_{-})$, satisfies 
\begin{eqnarray}\label{ConditionHellinger}
E\left[\int_0^T G_{s-}\left[(\varphi^{(m)}_s)^2+\lambda\psi^{(m)}_s\ln(\psi^{(m)}_s)-\lambda\psi^{(m)}_s+\lambda\right]dt\right]<+\infty.\end{eqnarray}
{\rm{(b)}}   The solution to (\ref{dualproblem}) 
exists and is given by
\begin{equation}\label{ZtildeG}
 {\widetilde Z}^{\mathbb G}:={\cal E}({\widetilde K}^{\mathbb G}),\  {\widetilde K}^{\mathbb G}:= - \sigma\widetilde\theta\is {\cal T}({W}) -\frac{\psi^{(m)}\zeta{\widetilde\theta}}{1+{\widetilde\theta}\zeta} \is{\cal T}({N^{\mathbb F}}).\end{equation}
{\rm{(c)}}  $\widetilde{\theta}$ is the log-optimal portfolio rate for the model $(S^{T\wedge\tau},\mathbb G)$. \\
\end{theorem}
 \begin{proof} 
 For the model (\ref{SPoisson2})-(\ref{parameters2}), the predictable characteristics of Section 3 can be derived as follows. Let  $\delta_{a}(dx) $ be the Dirac mass at the point $a$. Then in this case we have $d=1$ and 
 \begin{eqnarray*}
&&\mu (dt,dx) = \delta_{\zeta_tS_{t-}}(dx)dN_t, \  \nu  (dt , dx)= \delta_{\zeta_tS_{t-}}(dx) \lambda dt,\ F_t(dx) = \lambda \delta_{\zeta_tS_{t-}}(dx), \\
&&A_t = t,\ c= (S_{-}\sigma)^2,\ b= (\mu - \lambda \zeta I_{\{ |\zeta|S_{-} > 1 \}}) S_{-}, \ (\beta, g_m,m^\perp)=({{\varphi^{(m)}}\over{S_{-}\sigma}},0,0).\end{eqnarray*}
As a result, the set 
\begin{eqnarray*}
{\cal L}_{(\omega, t)}(S,\mathbb F)&&:=\{\varphi\in\mathbb R\ \big|\ \varphi x>-1\ F_{(\omega, t)}(dx)-a.e.\}=\{\varphi\in\mathbb R\ \big|\ \varphi S_{-}\zeta>-1\}\\
&&=\left(-{1/(S_{-}\zeta)^+},{1/(S_{-}\zeta)^-}\right)\end{eqnarray*}
 is an open set in $\mathbb R$ (with the convention $1/0^+=+\infty$). Then the condition (\ref{Cond4ptimalityG}), characterizing the optimal portfolio $\widetilde\varphi$, becomes an equation as follows.
 \begin{eqnarray}
0 && = \mu-\lambda\zeta I_{\{\vert\zeta\vert>{1/S_{-}}\}}+S_{-}\sigma^2({{\varphi^{(m)}}\over{S_{-}\sigma}}-\theta)+\lambda{{\psi^{(m)}\zeta}\over{1+S_{-}\theta\zeta}}-\lambda\zeta  I_{\{\vert\zeta\vert\leq {1/S_{-}}\}} \nonumber\\
&& = \mu -\lambda \zeta+\sigma\varphi^{(m)} - S_{-}\sigma^2\theta + \frac{\psi^{(m)} \lambda\zeta}{1+\theta S_{-}\zeta} .\label{mainequation4levy}\end{eqnarray}
By putting $\varphi:=1+\theta S_{-}\zeta>0$, the above equation is equivalent to 
$$0=- {{\sigma^2}\over{\zeta}}\varphi^2 +[\mu -\lambda \zeta+\sigma\varphi^{(m)}+{{\sigma^2}\over{\zeta}} ]\varphi+ \psi^{(m)} \lambda\zeta,$$
which has always (since $\psi^{(m)}>0$) a unique positive solution given by 
$$\widetilde\varphi:={{\Gamma\zeta+\vert\zeta\vert\sqrt{\Gamma^2+4\sigma^2\lambda\psi^{(m)}}}\over{2\sigma^2}},\quad \Gamma:=\mu -\lambda \zeta+\sigma\varphi^{(m)}+{{\sigma^2}\over{\zeta}}.$$
Hence, we deduce that $\widetilde\lambda:={\widetilde{\theta}}/S_{-}$, where $\widetilde\theta$ is given by (\ref{thetaTilde}),  coincides with $(\widetilde{\varphi}-1)/(S_{-}\zeta)$, satisfies $ 1 + \zeta\widetilde{\theta}>0$, and hence it is the unique solution to (\ref{mainequation4levy}). It is also clear that $\widetilde{\theta}$ is $S$-integrable (or equivalently $\widetilde\lambda$ is $S$-integrable) due to the assumptions in (\ref{parameters2})-(\ref{model4tau2}). As a result,  the optimal wealth process is ${\cal E}(\widetilde\lambda\is S^{\tau})={\cal E}(\widetilde\theta\is X^{\tau})$ and assertions (a) and (b) follow immediately using the above analysis and Theorems \ref{optimalportfoliogen}.
\qed\end{proof}

\vspace*{0,5cm}
\centerline{\textbf{APPENDIX}}
\appendix
 \normalsize
\section{Some $\mathbb G$-properties versus those in $\mathbb F$}
Some results in the following lemma sounds new to us. 
\begin{lemma}\label{PortfolioGtoF}  Let $A$ is a nondecreasing and $\mathbb F$-predictable, and suppose that $G>0$. Then the following assertions hold.\\
{\rm{(a)}} For any $\mathbb G$-predictable process $\varphi^{\mathbb G}$, there exists an $\mathbb F$-predictable process $\varphi^{\mathbb F}$ such that
  $\varphi^{\mathbb G}=\varphi^{\mathbb F}$ on ${\Lbrack0,\tau\Lbrack}$. Furthermore, if $\varphi^{\mathbb G}>0$ (respectively $\varphi^{\mathbb G}\leq 1$), then $\varphi^{\mathbb F}>0$ (respectively $\varphi^{\mathbb F}\leq 1$). \\
{\rm{(b)}} For any $\theta\in{\cal  L}(S^{\tau},\mathbb G)$, there exists ${\varphi}\in {\cal L}(S,\mathbb F)$ such that $ {\varphi}={\theta} $ on $\Lbrack0,\tau\Lbrack$.\\
{\rm{(c)}} For any $\theta\in L(S^{\tau},\mathbb G)$, there exists ${\varphi}\in L(S,\mathbb F)$ such that $ {\varphi}={\theta} $ on $\Lbrack0,\tau\Lbrack$.\\
{\rm{(d)}} Let  $v$ be an $\mathbb F$-predictable process. Then $v I_{\Lbrack0,\tau\Lbrack}\leq 0$ $P\otimes A$-a.e. if and only if $v\leq 0$ $P\otimes A$-a.e..\\
{\rm{(e)}} Let $\varphi$ be a nonnegative and $\mathbb F$-predictable process. Then $\varphi<+\infty$ $P\otimes A$-a.e. on $\Lbrack0,\tau\Lbrack$ if and only if $\varphi<+\infty$ $P\otimes A$-a.e.\\ 
{\rm{(f)}}  Let $V$ be an $\mathbb F$-predictable and nondecreasing process that takes values in $[0,+\infty]$. If $V^{\tau}$ is $\mathbb G$-locally integrable, then $V$ is $\mathbb F$-locally integrable.
\end{lemma}
\begin{proof} Assertion (a) is a particular case of \cite[Lemma B.1]{ACDJ1} and assertion (b) can be found in \cite[Lemma A.1]{ChoulliYansori1}, while assertions (e) and (f) follow immediately from \cite[Proposition B.2-(c)-(f)]{ACDJ1}. Thus the rest of this proof focuses on proving assertions (c) and (d).\\
{\rm{(c)}}  Let  $\theta\in{  L}(S^{\tau},\mathbb G)$. Then  on the one hand, due to \cite[Theorem 1.16, or Remark 2.2-(h)]{Stricker}, this equivalent to the set 
\begin{eqnarray*}\label{Lzero1}
{\cal X}^{\mathbb G}:=\left\{ \sup_{t\geq 0}\vert (H\theta I_{\{\vert\theta\vert\leq n}\is S^{\tau})_t\vert\ \Big|\ H\ {\mathbb G}-\mbox{predictable}\ \vert H\vert\leq 1, n\geq 1\right\}\end{eqnarray*}
being bounded in  probability.  On the other hand, a direct application of assertion (a), we deduce that there exists an $\mathbb F$-predictable process $\varphi$ such that  $\theta=\varphi$ on $\Lbrack0,\tau\Lbrack$, and the $\mathbb G$-predictable in ${\cal X}^{\mathbb G}$ can be replaced with $\mathbb F$-predictable as well. Furthermore, for any $T\in (0,+\infty)$, any $c>0$ and any $\mathbb F$-predictable $H$ bounded by one, by putting $Q:=(G_T/E[G_T])\cdot P\sim P$ and $X^*_t:=\sup_{0\leq s\leq t}\vert X_s\vert$ for right continuous with left limits process $X$, we have 
\begin{eqnarray}\label{domination}
P\left( (H\theta I_{\{\vert\theta\vert\leq n}\is S^{\tau})_{T}^*>c\right)\geq Q\left((H\varphi I_{\{\vert\varphi\vert\leq n}\is S)_T^*>c\right)E[G_T].
\end{eqnarray}  
This allows us to conclude, due to \cite[Theorem 1.16, or Remark 2.2-(h)]{Stricker} again, that $\varphi\in L(S^T,\mathbb F)$, for any $T\in (0,+\infty)$. Thus, assertion (d) follows from combining this latter fact and \cite[Theorem 4]{StrickerSI}.\\
{\rm{(d)}} Let  $v$ be an $\mathbb F$-predictable process such that $v I_{\Lbrack0,\tau\Lbrack}\leq 0$ $P\otimes A$-a.e. This is equivalent to 
$$0=E[v^+\is A_{\tau\wedge T}]=E[v^+G_{-}\is A_{T}],$$
or equivalently $v^+=0$  $P\otimes A$-a.e.. This is obviously equivalent to  $v\leq 0$ $P\otimes A$-a.e., and assertion (d) is proved. This ends the proof of the lemma.\qed
\end{proof}
The following recalls $\mathbb G$-compensator of $\mathbb F$-optional process stopped at $\tau$. 
\begin{lemma}\label{lemmaV}
Let $V \in {\cal A}_{loc} ({\mathbb F})$, then we have $$(V^{\tau})^{p, {\mathbb G}}= I_{\Lbrack 0,\tau\Lbrack} G_-^{-1} \is ({\widetilde G} \is V)^{p, {\mathbb F}}.$$
 \end{lemma}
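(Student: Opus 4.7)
The plan is to verify the two defining properties of the $\mathbb{G}$-dual predictable projection for the candidate process $U := I_{\Lbrack 0,\tau\Lbrack} G_-^{-1} \is (\widetilde G \is V)^{p,\mathbb F}$: (i) that $U$ is $\mathbb{G}$-predictable with locally integrable variation, and (ii) that for every bounded $\mathbb{G}$-predictable test process $\varphi^{\mathbb G}$, one has $E[(\varphi^{\mathbb G} \is V^\tau)_\infty] = E[(\varphi^{\mathbb G} \is U)_\infty]$. Property (i) is immediate: $I_{\Lbrack 0,\tau\Lbrack}$ is left-continuous and $\mathbb{G}$-adapted hence $\mathbb{G}$-predictable, while $G_-^{-1}$ and $(\widetilde G \is V)^{p,\mathbb F}$ are $\mathbb{F}$-predictable. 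The local integrability in $\mathbb{G}$ will follow from the integrability of $V$ in $\mathbb F$ together with Lemma \ref{PortfolioGtoF}(d) applied to the nondecreasing variation process.

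For (ii), I would first invoke Lemma \ref{PortfolioGtoF}(e) to obtain an $\mathbb F$-predictable representative $\varphi^{\mathbb F}$ of $\varphi^{\mathbb G}$ on $\Lbrack 0,\tau\Rbrack$; since both $dV^\tau$ and $dU$ are supported on that stochastic interval the substitution costs nothing. The LHS then reads $E\bigl[\int_0^\infty \varphi^{\mathbb F}_s I_{\{s\leq\tau\}}\, dV_s\bigr]$; applying the optional projection identity ${}^{o,\mathbb F}(I_{\Lbrack 0,\tau\Rbrack}) = \widetilde G$ against the $\mathbb F$-optional measure $\varphi^{\mathbb F}\,dV$ and then using $\mathbb F$-predictable duality rewrites it as $E\bigl[\int_0^\infty \varphi^{\mathbb F}_s\, d(\widetilde G \is V)^{p,\mathbb F}_s\bigr]$. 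The RHS expands to $E\bigl[\int_0^\infty \varphi^{\mathbb F}_s I_{\{s<\tau\}} G_{s-}^{-1}\, d(\widetilde G \is V)^{p,\mathbb F}_s\bigr]$, and since $\varphi^{\mathbb F} G_-^{-1}\, d(\widetilde G \is V)^{p,\mathbb F}$ is an $\mathbb F$-predictable measure, inserting the $\mathbb F$-predictable projection of $I_{\Lbrack 0,\tau\Lbrack}$ collapses the expression to the same quantity, matching the two sides.

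I expect the main technical obstacle to be the careful bookkeeping of the jump of $V$ at $\tau$ and the distinction between $\{s\leq\tau\}$ (appearing in $dV^\tau$) and $\{s<\tau\}$ (appearing in $dU$). This forces one to use the Azéma identity $\widetilde G = G + \Delta D^{o,\mathbb F}$ linking the optional and predictable versions of the survival process, together with the identification of ${}^{p,\mathbb F}(I_{\Lbrack 0,\tau\Lbrack})$ in terms of $G_-$, in order to reconcile the "$\widetilde G$" that appears on the LHS through optional projection with the "$G_-$" that appears on the RHS through predictable projection. An alternative but equivalent route, which may be cleaner, is to directly exhibit $V^\tau - U$ as a $\mathbb{G}$-local martingale by writing $V^\tau = V_0 + \widetilde G^{-1} I_{\Lbrack 0,\tau\Lbrack} \is (\widetilde G \is V) + \Delta V \is D$, decomposing $\widetilde G \is V = M + (\widetilde G \is V)^{p,\mathbb F}$ with $M$ an $\mathbb F$-local martingale, and invoking both the $\mathbb{G}$-local martingale $\mathcal T(M)$ defined in (\ref{processMhat}) and the $\mathbb{G}$-martingale $N^{\mathbb G}$ from (\ref{processNG}) to cancel all non-martingale terms.
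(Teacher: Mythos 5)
Your overall plan---identify the $\mathbb G$-compensator by duality against bounded $\mathbb G$-predictable test processes, replace the test process by an $\mathbb F$-predictable one up to $\tau$ via Lemma \ref{PortfolioGtoF}(e), project $I_{\{s\le\tau\}}$ optionally to produce $\widetilde G$, and then invoke $\mathbb F$-predictable duality---is the standard argument for this lemma; note the paper itself gives no proof and refers to \cite{ACDJ1,ACDJ2,ACDJ3}, where exactly this computation is carried out. The one genuine problem is your treatment of the right endpoint, and it comes from misreading the bracket macros: in this paper the symbol in the statement denotes the left-open, right-closed stochastic interval, so the indicator multiplying $G_-^{-1}$ is $I_{\{0<s\le\tau\}}$, not $I_{\{s<\tau\}}$ as you use on the right-hand side. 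With the correct reading there is nothing to reconcile: $\{0<s\le\tau\}$ is left-continuous and $\mathbb G$-adapted, hence $\mathbb G$-predictable (this is precisely your own predictability argument, which fails for $\{s<\tau\}$, since $[0,\tau)$ is $\mathbb G$-predictable only when $\tau$ is a predictable time), and its $\mathbb F$-predictable projection equals $G_-$, so the factor $G_-^{-1}$ cancels and both sides of the duality reduce to $E[\int\varphi^{\mathbb F}\,d(\widetilde G\is V)^{p,\mathbb F}]$ with no correction term. Under your literal reading the deferred ``reconciliation'' cannot succeed, because the identity is then false: if $(\widetilde G\is V)^{p,\mathbb F}$ jumps at a time charged by $\tau$ (take $\mathbb F$ trivial, $V$ a unit jump at a fixed time $t_0$ with $P(\tau=t_0)>0$), the two candidates differ by the jump term $G_{\tau-}^{-1}\Delta(\widetilde G\is V)^{p,\mathbb F}_{\tau}$, and no use of $\widetilde G=G+\Delta D^{o,\mathbb F}$ closes that gap.

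Two smaller repairs. First, Lemma \ref{PortfolioGtoF}(d) runs in the wrong direction for your step (i): it deduces $\mathbb F$-local integrability from $\mathbb G$-local integrability of the stopped process. Here it suffices to note that any $\mathbb F$-localizing sequence for the variation of $V$ consists of $\mathbb G$-stopping times, so $V^{\tau}\in{\cal A}_{loc}(\mathbb G)$ and its $\mathbb G$-compensator exists; running the duality along such a localizing sequence then identifies it with the stated candidate, whose local integrability comes for free. Second, the decomposition in your alternative route, $V^{\tau}=V_0+\widetilde G^{-1}I_{\{0<s\le\tau\}}\is(\widetilde G\is V)+\Delta V\is D$, double counts the jump of $V$ at $\tau$ under the correct reading, since the middle term already equals $V^{\tau}-V_0$; the same endpoint confusion is at work there. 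Once the interval is read correctly, your main duality computation is complete and is the intended proof.
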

For the proof of this lemma and other related results, we refer to \cite{ACDJ1,ACDJ3}.
\section{Some useful martingale integrability properties}
The results of this section are new, very useful, and not technical at all.
\begin{lemma}\label{H0toH1martingales} Consider $K\in {\cal M}_{0,loc}(\mathbb H)$ with $1+\Delta K>0$, and let $H^{(0)}(K,P)$ be given by Definition \ref{Hellinger}. 
If $E[H^{(0)}_T(K,P)]<+\infty$, then $E[\sqrt{[K,K]_T}]<+\infty$ or equivalently $E[\displaystyle\sup_{0\leq t\leq T}\vert K_t\vert]<+\infty$.
\end{lemma}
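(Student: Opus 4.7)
\textbf{Proof plan for Lemma \ref{H0toH1martingales}.}
The plan is to bound $\sqrt{[K,K]_T}$ pathwise by a suitable function of $H^{(0)}(K,P)_T$, then take expectations and conclude via the Burkholder--Davis--Gundy inequality. Set $\phi(x):=x-\ln(1+x)$ for $x>-1$, so that
$$H^{(0)}(K,P)=\tfrac{1}{2}\langle K^c\rangle+\sum\phi(\Delta K),\qquad \phi\ge 0.$$
The first step is to decompose
$$[K,K]_T=\langle K^c\rangle_T+\sum_{0<s\le T}(\Delta K_s)^2$$
and use the subadditivity $\sqrt{a+b}\le\sqrt a+\sqrt b$. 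The continuous part is easy: since $\tfrac{1}{2}\langle K^c\rangle\le H^{(0)}(K,P)$, one gets $\sqrt{\langle K^c\rangle_T}\le\sqrt{2\,H^{(0)}(K,P)_T}$ and hence, by Cauchy--Schwarz, $E[\sqrt{\langle K^c\rangle_T}]\le\sqrt{2E[H^{(0)}(K,P)_T]}<+\infty$.

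For the jump part, I would split the jumps into $\{|\Delta K|\le 1\}$ and $\{\Delta K>1\}$ (recall $\Delta K>-1$). The Taylor formula gives $\phi(x)=x^2/(2(1+\xi)^2)$ for some $\xi$ between $0$ and $x$, which on the one hand yields $\phi(x)\ge x^2/2$ on $(-1,0]$ (where $\phi$ in fact blows up at $-1$, which only helps), and on the other hand $\phi(x)\ge x^2/8$ on $[0,1]$. Thus a universal constant $C_1$ gives $x^2\le C_1\phi(x)$ for all $x\in(-1,1]$, so
$$\sqrt{\sum_{s\le T}(\Delta K_s)^2\,I_{\{|\Delta K_s|\le1\}}}\;\le\;\sqrt{C_1\,H^{(0)}(K,P)_T}.$$
For large positive jumps, $\phi'(x)=x/(1+x)\ge1/2$ on $[1,\infty)$, so $\phi(x)\ge(x-1)/2$ there; this gives $x\le 4\phi(x)$ for $x\ge 2$, while on the bounded range $[1,2]$ the elementary bound $x\le 2\le (2/\phi(1))\phi(x)$ handles the finitely many such jumps. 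Hence there is a constant $C_2$ with $\Delta K\,I_{\{\Delta K>1\}}\le C_2\phi(\Delta K)$, and using $\sqrt{\sum a_i^2}\le\sum a_i$ for nonnegative $a_i$,
$$\sqrt{\sum_{s\le T}(\Delta K_s)^2\,I_{\{\Delta K_s>1\}}}\;\le\;\sum_{s\le T}\Delta K_s\,I_{\{\Delta K_s>1\}}\;\le\;C_2\,H^{(0)}(K,P)_T.$$

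Combining these three bounds and taking expectations (using Cauchy--Schwarz on the square-root terms), $E[\sqrt{[K,K]_T}]$ is finite. Finally, the Burkholder--Davis--Gundy inequality for $p=1$ delivers the stated equivalence $E[\sqrt{[K,K]_T}]<+\infty\Longleftrightarrow E[\sup_{0\le t\le T}|K_t|]<+\infty$, since $K_0=0$. The localization step needed to justify this (applying everything up to a reducing sequence $\sigma_n$ and invoking monotone convergence for the nondecreasing processes $[K,K]$, $\langle K^c\rangle$, $\sum\phi(\Delta K)$) is routine.

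The main obstacle is the pointwise comparison between $x^2$ and $\phi(x)$: one does not have $x^2\le C\phi(x)$ uniformly on $(-1,\infty)$ because $\phi(x)\sim x$ while $x^2$ grows quadratically as $x\to+\infty$. The crucial trick is therefore to first pass to square roots and, separately, to use the bound $\sqrt{\sum a_i^2}\le\sum a_i$ on the large-jump set where the linear bound on $\phi$ is the natural one, reserving the quadratic bound for the small-jump set where it is sharp.
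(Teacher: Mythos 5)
Your proof is correct. Note that the paper states Lemma \ref{H0toH1martingales} in the appendix without any proof (it is only invoked later, e.g.\ in Proposition \ref{Hzero4Log(Z)} and Theorem \ref{thoeremAppl1}), so there is no argument of the authors to compare against; your write-up actually supplies the missing proof. The pathwise estimates you use are all valid: with $\phi(x)=x-\ln(1+x)$ one has $\phi(x)\geq x^2/2$ on $(-1,0]$ and $\phi(x)\geq x^2/8$ on $[0,1]$, giving $x^2\leq 8\,\phi(x)$ on $(-1,1]$, and $\phi(x)\geq \phi(1)+(x-1)/2$ for $x\geq 1$ gives a constant $C_2$ with $x\leq C_2\,\phi(x)$ there; combined with $\sqrt{\sum_i a_i^2}\leq\sum_i a_i$ on the large-jump set, $\sqrt{a+b}\leq\sqrt{a}+\sqrt{b}$, $\tfrac12\langle K^c\rangle\leq H^{(0)}(K,P)$, Jensen for the square-root terms, and the $p=1$ Burkholder--Davis--Gundy inequality (which holds for local martingales with $K_0=0$ without extra integrability), this yields $E[\sqrt{[K,K]_T}]<+\infty$ and the stated equivalence with $E[\sup_{0\leq t\leq T}\vert K_t\vert]<+\infty$. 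Your closing remark correctly identifies the one point where a naive comparison $x^2\leq C\phi(x)$ on all of $(-1,+\infty)$ would fail, and the split into small and large jumps is exactly the right way around it.
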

\begin{proof}  Let $K\in {\cal M}_{0,loc}(\mathbb H)$ such that $1+\Delta K>0$ and $E[H^{(0)}_T(K,P)]<+\infty$. Then remark that, for any $\delta\in (0,1)$, we always have 
$$\Delta K-\ln(1+\Delta K)\geq {{\delta\vert \Delta K\vert }\over{\max\left(2(1-\delta),1+\delta^2\right)}}I_{\{\vert \Delta K\vert >\delta\}}+{{(\Delta K)^2}\over{1+\delta}}  I_{\{\vert \Delta K\vert \leq \delta\}}.$$
By combining this with (\ref{HellingerLog}), on the one hand, we deduce that
\begin{eqnarray*}
&&E\left[\langle K^c\rangle_T+\sum_{0<t\leq T}\vert \Delta K_t\vert  I_{\{\vert \Delta K_t\vert >\delta\}}+\sum_{0<t\leq T}(\Delta K_t)^2  I_{\{\vert \Delta K_t\vert \leq \delta\}}\right]\\
&&\leq C_{\delta} E\left[\langle K^c\rangle_T+\sum_{0<s\leq T}(\Delta K_s-\ln(1+\Delta K_s))\right] \leq 2C_{\delta}E[H^{(0)}_T(K, P)] < +\infty,\end{eqnarray*}
where $C_{\delta}:=\delta^{-1}+\max\left(\delta^{-1}-2,\delta\right)$.  On the other hand, it is clear that 
$$[K,K]^{1/2}_T\leq \sqrt{\langle K^c\rangle_T}+\sum_{0<t\leq T}\vert \Delta K_t\vert  I_{\{\vert \Delta K_t\vert >\delta\}}+\sqrt{\sum_{0<t\leq T}(\Delta K_t)^2  I_{\{\vert \Delta K_t\vert \leq \delta\}}}.$$
This ends the proof of the lemma.\qed
\end{proof}
\begin{proposition}\label{Hzero4Log(Z)}  Let $Z$ be a positive supermartingale such that $Z_0=1$. Then the following assertions hold.\\
{\rm{(a)}} There exist $K\in {\cal M}_{loc}(\mathbb H)$ and an nondecreasing and $\mathbb H$-predictable process $V$ such that $K_0=V_0=0$ , $\Delta K>-1$, and $Z={\cal E}(K)\exp(-V)$.\\
{\rm{(b)}} $-\ln(Z)$ is a uniformly integrable submartingale if and only if there exists a local martingale $N$ and a nondecreasing and predictable process $V$ such that $\Delta N>-1,\ Z={\cal E}(N)\exp(-V)$ and 
\begin{eqnarray}\label{Conditions}
E\left[V_T+H^{(0)}_T(N,P)\right]<+\infty.
\end{eqnarray}
{\rm{(c)}}  Suppose that there exist a finite sequence of positive supermartingale $(Z^{(i)})_{i=1,...,n}$ such that the product $Z:=\displaystyle\prod_{i=1}^n Z^{(i)}$ is a supermartingale. Then $-\ln(Z)$ is uniformly integrable submartingale if and only if all  $-\ln(  Z^{(i)})$, $i=1,...,n,$ are uniformly integrable submartingales.
\end{proposition}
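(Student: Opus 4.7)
The skeleton of the argument rests on the multiplicative decomposition of a positive supermartingale combined with an identification of the Doob--Meyer predictable increasing part of $-\ln Z$. The sufficiency direction then hinges crucially on Lemma \ref{H0toH1martingales}, which upgrades mere integrability of the Hellinger-type process into uniform integrability of the underlying local martingale.

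\textbf{Part (a).} First, I would invoke the multiplicative decomposition of positive supermartingales: since $Z>0$ and $Z_0=1$, there is a unique pair $(N,V)$ with $N$ an $\mathbb{H}$-local martingale satisfying $\Delta N>-1$ and $N_0=0$, and $V$ predictable nondecreasing with $V_0=0$, such that $Z=\mathcal{E}(N)\exp(-V)$. Applying the closed formula for $\ln\mathcal{E}(N)$ gives the identity
\[
-\ln Z \;=\; -N \;+\; V \;+\; H^{(0)}(N,\mathbb{H}),
\]
so $-\ln Z$ is decomposed into a local martingale $-N$ and the nonnegative nondecreasing optional process $V+H^{(0)}(N,\mathbb{H})$. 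Passing to the predictable compensator of $H^{(0)}(N,\mathbb{H})$ yields the Doob--Meyer decomposition
\[
-\ln Z \;=\; \widetilde{M} \;+\; \bigl(V + h^{(0)}(N,\mathbb{H})\bigr),
\]
with $\widetilde{M}$ an $\mathbb{H}$-local martingale and the bracketed term predictable nondecreasing.

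For necessity, the hypothesis that $-\ln Z$ is a UI submartingale forces it to be of class (D); Doob--Meyer then delivers $E[V_T + h^{(0)}(N,\mathbb{H})_T]<+\infty$, and by the defining property of the predictable dual projection this is the same as $E[V_T+H^{(0)}(N,\mathbb{H})_T]<+\infty$. For sufficiency, from $E[H^{(0)}(N,\mathbb{H})_T]<+\infty$ Lemma \ref{H0toH1martingales} supplies $E[\sup_{t\le T}|N_t|]<+\infty$, so $N$ is a UI martingale, and $\{-N_t\}_{t\le T}$ is UI. Since $V+H^{(0)}(N,\mathbb{H})$ is a nonnegative, nondecreasing process with integrable terminal value (hence itself UI), the decomposition exhibits $-\ln Z$ as a sum of two UI families, which is UI.

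\textbf{Part (b).} The implication ($\Leftarrow$) is immediate from $-\ln Z=\sum_{i=1}^{n}(-\ln Z^{(i)})$: a finite sum of UI families is UI, and each summand is a submartingale, so the sum is an UI submartingale. For ($\Rightarrow$), apply part (a) to each factor, writing $Z^{(i)}=\mathcal{E}(N^{(i)})\exp(-V^{(i)})$ and setting $A^{(i)}:=V^{(i)}+h^{(0)}(N^{(i)},\mathbb{H})$, the Doob--Meyer predictable increasing part of $-\ln Z^{(i)}$. By linearity and uniqueness of Doob--Meyer, $\sum_i A^{(i)}$ is the predictable increasing part of $-\ln Z$; the UI hypothesis on $-\ln Z$ gives $E[\sum_i A^{(i)}_T]<+\infty$, and nonnegativity of each $A^{(i)}$ forces $E[A^{(i)}_T]<+\infty$ for every $i$. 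Part (a) applied in the reverse direction then shows each $-\ln Z^{(i)}$ is UI.

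\textbf{Main obstacle.} The delicate point is the sufficiency in (a): one must pass from finiteness of $E[V_T+H^{(0)}(N,\mathbb{H})_T]$ to \emph{uniform integrability} of the whole trajectory, not merely to integrability of $-\ln Z_T$. This upgrade is exactly the role of Lemma \ref{H0toH1martingales}, which bounds $E[\sup_{t\le T}|N_t|]$ and thereby promotes $N$ from local martingale to UI martingale. A secondary technicality is the existence of the compensator $h^{(0)}(N,\mathbb{H})$, which is secured by local integrability of the nondecreasing process $H^{(0)}(N,\mathbb{H})$ (again controlled via the same lemma after localization).
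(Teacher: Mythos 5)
Your overall strategy is the paper's: multiplicative decomposition $Z={\cal E}(N)\exp(-V)$, the identity $-\ln Z=-N+V+H^{(0)}(N,\mathbb H)$, Lemma \ref{H0toH1martingales} for the sufficiency direction, and summation of the factors for part (b). However, there is a genuine gap in your necessity argument: the claim that a uniformly integrable submartingale is of class (D). That implication is false in general. If $M$ is a positive strict local martingale with $\sup_{t\leq T}E[M_t^2]<+\infty$ (the inverse three-dimensional Bessel process, say), then $X:=-M$ is a submartingale whose time-indexed family is uniformly integrable, yet $X$ is not of class (D), since class (D) would force $M$ to be a uniformly integrable martingale. You offer no argument that $-\ln Z$ has extra structure excluding this phenomenon, and both your (a)-necessity and your (b)-forward step (``the UI hypothesis on $-\ln Z$ gives $E[\sum_i A^{(i)}_T]<+\infty$'') rest on this unproved claim. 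The detour through the compensators $h^{(0)}(N,\mathbb H)$ and $h^{(0)}(N^{(i)},\mathbb H)$ also presupposes local integrability of $H^{(0)}(N,\mathbb H)$ and $H^{(0)}(N^{(i)},\mathbb H)$, which is not available a priori at that stage; your remark that this is ``controlled via the same lemma after localization'' is circular, since Lemma \ref{H0toH1martingales} requires $E[H^{(0)}_T]<+\infty$ as input rather than producing it.

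The gap is repairable, and the repair is precisely the paper's route, which never forms a compensator. Stop at a localizing sequence $(\tau_n)_n$ for $N$ (chosen also to make the stopped increasing part integrable), so that $E[-\ln Z_{\tau_n\wedge T}]=E\bigl[V_{\tau_n\wedge T}+H^{(0)}_{\tau_n\wedge T}(N,\mathbb H)\bigr]$; the right-hand side increases to $E\bigl[V_T+H^{(0)}_T(N,\mathbb H)\bigr]$ by monotone convergence, while the left-hand side remains bounded by $E[-\ln Z_T]<+\infty$, either by optional sampling for the submartingale $-\ln Z$ at the bounded stopping times $\tau_n\wedge T$ or by the uniform integrability of the stopped family as invoked in the paper. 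Applying the same localization to the additive decomposition $-\ln Z=-\sum_i N^{(i)}+\sum_i\bigl(V^{(i)}+H^{(0)}(N^{(i)},\mathbb H)\bigr)$ gives the forward direction of (b) directly, without Doob--Meyer uniqueness. The remainder of your argument — sufficiency in (a) via Lemma \ref{H0toH1martingales} and domination by $\sup_{t\leq T}\vert N_t\vert+V_T+H^{(0)}_T(N,\mathbb H)$, and the reverse direction of (b) — is correct and coincides with the paper's proof.
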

\begin{proof} It is clear that assertion (a) is obvious. Thus, the rest of this proof will be given in two parts, where we prove assertions (b) and (c) respectively.\\
{\bf Part 1.} It is clear that there exist unique local martingale $N$ and a nondecreasing and predictable process $V$ such that  $N_0=V_0=0$, 
$$\Delta N>-1,\quad\quad Z={\cal E}(N)\exp(-V).$$
Thus, we derive 
\begin{eqnarray}\label{Ito}-\ln(Z)&&=-\ln({\cal E}(N))+V=-N+H^{(0)}(N,P)+ V,\end{eqnarray}
where both processes $V$ and $H^{(0)}(N,P)$ are nondecreasing. \\
Suppose that  $-\ln(Z)$ is a uniformly integrable submartingale, and let $(\tau_n)_n$ be a sequence of stopping times that increases to infinity and $N^{\tau_n}$ is a martingale. Then on the one hand, by stopping (\ref{Ito}) with $\tau_n$, and taking expectation afterwards  we get 
$$
E[-\ln(Z_{\tau_n\wedge T})]=E\left[V_{\tau_n\wedge T}+H^{(0)}_{\tau_n\wedge T}(N,P)\right].$$
On the other hand, since $\{ -\ln(Z_{\tau_n\wedge T}),\ n\geq 0\}$ is uniformly integrable and the RHS term of the above equality is increasing, by letting $n$ goes to infinity in this equality, (\ref{Conditions}) follows immediately. Now suppose that 
 (\ref{Conditions})  holds. As a consequence $E[H^{(0)}_{ T}(N,P) ]<+\infty$, and by combining this with Lemma \ref{H0toH1martingales} and (\ref{Ito}), we deduce that $-\ln(Z)$ is a uniformly integrable submartingale.\\
 {\bf Part 2.} Here we prove assertion (b). A direct application of assertion (a) to each $Z^{(i)}$ ($i=1,..., n$), we obtain the existence of $N^{(i)}\in {\cal M}_{loc}(\mathbb H)$ and nondecreasing and predictable $V^{(i)}$ such that
 $$\Delta N^{(i)}>-1,\quad\quad Z^{(i)}={\cal E}(N^{(i)})\exp(-V^{(i)}),\quad i=1,...,n.$$
 Furthermore, we derive 
 \begin{eqnarray*}-\ln(Z)&&=-\sum_{i=1}^n N^{(i)} +\sum_{i=1}^n H^{(0)}(N^{(i)},P)+ \sum_{i=1}^n V^{(i)}.\end{eqnarray*}
Hence, $-\ln(Z)$ is a uniformly  integrable submartingale if and only if 
\begin{eqnarray*}\label{Variable}
E\left[\sum_{i=1}^n H^{(0)}_T(N^{(i)},P)+ \sum_{i=1}^n V^{(i)}_T\right]<+\infty,\end{eqnarray*}
or equivalently $E\left[H^{(0)}_T(N^{(i)},P)+V^{(i)}_T\right]<+\infty$ for all $i=1,.., n$. Hence, thanks to assertion (b) ---applied to each $Z^{(i)}$ for $i=1,...,n$---, the proof of assertion (c) follows. This ends the proof of the proposition.\qed
\end{proof} 
\section{Martingales' parametrization via predictable characteristics}\label{sectionC}
Consider an arbitrary general model $(X, \mathbb H)$, and recall the corresponding notation given in the first paragraph of  Section \ref{section4}  up to (\ref{PCharac4X}).\\

For the  following, we refer to
\cite[Theorem 3.75]{J79}  and to \cite[Lemma 4.24]{JS03}.
\begin{theorem}\label{tmgviacharacteristics} Let $N\in {\cal M}_{0,loc}$. Then, there exist $\phi\in L^1_{loc}(X^c)$, $N'\in {\cal M}_{loc}$ with
$[N',X]=0$, $N'_0=0$ and functionals $f\in {\widetilde{{\cal P}}}$ and $g\in
{\widetilde{{\cal O}}}$ such that the following hold.\\
{\rm{(a)}} $\sqrt{(f-1)^2\star\mu}$ and $\Bigl(\sum (\widehat f- a)^2(1-a)^{-2} I_{\{a<1\}}I_{\{\Delta X=0\}}\Bigr)^{1/2}$ belong to ${\cal A}^+_{loc}$.\\
{\rm{(b)}}   $(g^2\star\mu)^{1/2}\in{\cal A}^+_{loc}$, $M^P_{\mu}(g\ |\ {\widetilde {{\cal P}}})=0,$ $P\otimes\mu$-a.e., $ \{a=1\}\subset\{\widehat f=1\}$, and
\begin{equation}
\label{Ndecomposition}
 N=\phi\cdot X^c+\left(f-1+{{\widehat f-a}\over{1-a}}I_{\{a<1\}}\right)\star(\mu-\nu)+g\star\mu+{N'}.
\end{equation} \end{theorem}
The quadruplet $ (\phi, f, g, N') $ is called throughout the paper by Jacod's components of $N$ (under $P$).
\section{A result on log-optimal portfolio: Choulli and Yansori (2020)}\label{DeepResultonDual}
 Herein, we consider the general setting and its notation, as in the first paragraph of Section \ref{section5}, where $(X,\mathbb H)$ is an arbitrary general model. 
\begin{theorem}\label{LemmaCrucial}
Let $X$ be an $\mathbb H$-semimartingale with predictable characteristics $\left(b,c,F, A\right)=\left(b^X,c^X,F^X, A^X\right)$, and ${\cal K}_{log}$ be the function given by (\ref{Kfunction}). 
Then the following assertions are equivalent.\\
{\rm{(a)}} The set ${\cal D}_{log}(X,\mathbb H)$, given by  (\ref{DeflatorsLOG}), is not empty (i.e. ${\cal D}_{log}(X,\mathbb H)\not=\emptyset$).\\
{\rm{(b)}} There exists an $\mathbb H$-predictable process $\widetilde\psi\in{\cal L}(X,\mathbb H)$ such that, for any $\varphi$ belonging to  ${\cal L}(X,\mathbb H)$,  the following hold 
\begin{eqnarray}
&&(\varphi-\widetilde\psi)^{tr}(b-c\widetilde\psi)+ \int \left( {{(\varphi-\widetilde\psi)^{tr}x}\over{1+{\widetilde\psi}^{tr}x}}-(\varphi-\widetilde\psi)^{tr}h(x)\right)F(dx)\leq 0, \label{C6forX}\\
&&E\left[{\widetilde V}^X_T+{1\over{2}}(\widetilde\psi^{tr}c\widetilde\psi\is A)_T+({\cal K}_{log}(\widetilde\psi^{tr}x)\star\nu)_T\right]<+\infty ,\label{Condi11}\\
&& {\widetilde V}^X:=\left[ \widetilde\psi^{tr}(b-c\widetilde\psi)+\int \left[{{\widetilde\psi^{tr}x }\over{1+\widetilde\psi^{tr}x}}-{\widetilde\psi}^{tr}h(x)\right] F(dx)\right]\is A\hskip 0.5cm\label{processV}
\end{eqnarray}
{\rm{(c)}} There exists a unique $\widetilde Z\in{\cal D}_{log}(X,\mathbb H)$ such that 
\begin{eqnarray}\label{dualSolution}
\inf_{Z\in{\cal D}(X,\mathbb H)}E[-\ln(Z_T)]=E[-\ln(\widetilde Z_T)].
\end{eqnarray}
{\rm{(d)}} There exists a unique $\widetilde\theta\in\Theta(X,\mathbb H)$ such that 
\begin{eqnarray}\label{PrimalSolution}
\sup_{\theta\in\Theta(X,\mathbb H)}E[\ln(1+(\theta\is X)_T)]=E[\ln(1+(\widetilde\theta\is X)_T)]<+\infty.
\end{eqnarray}
{\rm{(e)}} The num\'eraire portfolio exists , and its portfolio ``rate" $\widetilde\psi$ satisfies (\ref{Condi11}).\\
Furthermore, $\widetilde\theta (1+(\widetilde\theta\is X)_{-})^{-1}$ and $\widetilde\psi$ coincide $P\otimes A$-a.e.,  and 
\begin{eqnarray}
&& \widetilde\varphi\in L(X^c,\mathbb H)\cap {\cal L}(X,\mathbb H),\quad \sqrt{((1+\widetilde\varphi^{tr}x)^{-1}-1)^2\star\mu}\in{\cal A}^+_{loc}(\mathbb H),\label{integrabilities}\\
&&{1\over{\widetilde Z}}={\cal E}(\widetilde\psi\is X),\ \widetilde Z:={\cal E}(K^X){\cal E}(-V^X),\ K^X:=-\widetilde\psi\is X^c+{{-{\widetilde\Gamma}^X\widetilde\psi^{tr}x}\over{1+\widetilde\psi^{tr}x}}\star(\mu-\nu).\hskip 1cm\label{duality}\\
&&{\widetilde\Gamma}^X:=\left(1-a+\widehat{f^{(op)}}\right)^{-1},\quad f^{(op)}(t,x):=\left(1+{\widetilde\psi}^{tr}_tx\right)^{-1}.\label{Gammaf(op)}\end{eqnarray}
\end{theorem} 
\section{Proof of Lemma \ref{Lemma4Uniqueness}}\label{Section4Corollary5.3}
 Let $\theta_1$ and $\theta_2$ two elements of $ {\cal L}(S,\mathbb F)$ such that for any $\theta\in{\cal L}(S,\mathbb F)$, we have
\begin{eqnarray*}
&&(\theta-{\theta}_1)^{tr}(b-c{\theta}_1)+\int \left({{(\theta-{\theta}_1)^{tr}x}\over{1+{\theta}_1^{tr}x}}-(\theta-{\theta}_1)^{tr}h(x)\right)F(dx)\leq 0,\\
&&(\theta-{\theta}_2)^{tr}(b-c{\theta}_2)+\int \left({{(\theta-{\theta}_2)^{tr}x}\over{1+{\theta}_2^{tr}x}}-(\theta-{\theta}_2)^{tr}h(x)\right)F(dx)\leq 0.\end{eqnarray*}
By considering $\theta=\theta_2$ for the first inequality and $\theta=\theta_1$ for the second inequality and adding the resulting two inequalities  afterwards, we get 
\begin{eqnarray*}
(\theta_1-{\theta}_2)^{tr}c({\theta}_1-{\theta}_2)+\int\left({{1+{\theta}_2^{tr}x}\over{1+{\theta}_1^{tr}x}}+{{1+{\theta}_1^{tr}x}\over{1+{\theta}_2^{tr}x}}-2\right)F(dx)\leq 0.
\end{eqnarray*}
Then remark that, for any $x>0$, $x+x^{-1}-2$ is always nonnegative, and it is null if and only if $x=1$. Thus, by combining this fact with the above inequality we deduce that $c\theta_1=c\theta_2$ $P\otimes A$-a.e.. and $\theta_1^{tr}x=\theta_2^{tr}x$ $P\otimes A\otimes F$-a.e.. This ends the proof of the lemma. 
\begin{acknowledgements}This research is fully supported financially by the
Natural Sciences and Engineering Research Council of Canada, through Grant G121210818. \\ The authors would like to thank  Safa Alsheyab, Ferdoos Alharbi, Jun Deng, Youri Kabanov and Mich\`ele Vanmalele  for several comments, fruitful discussions on the topic, and/or for providing important and useful references.   
\end{acknowledgements}

\begin{thebibliography}{1}

\bibitem{ACDJ1} Aksamit, A., Choulli, T., Deng, J., Jeanblanc, M. : No-arbitrage up to random horizon for quasi-left-continuous models, Finance  Stochastics  21:1103-1139, (2017).

\bibitem{ACDJ3} Aksamit, A., Choulli, T., Deng, J., Jeanblanc, M.  \newblock{No-arbitrage under additional information for thin semimartingale models,} Stochastic Processes and their Applications, 129, pp: 3080-3115  (2019).

\bibitem{ACJ} Aksamit, A. and Choulli, T. and Jeanblanc, M. : On an Optional Semimartingale Decomposition and the Existence of a Deflator in an Enlarged Filtration, Insurance: Springer International Publishing, 187-218, (2015).

\bibitem{amendingerimkellerschweizer98} Amendinger, J., Imkeller, P., and Schweizer, M. : Additional logarithmic utility of an insider, Stochastic Processes
 and their Applications, Vol. 75, No. 2, pp: 263--286, (1998). 
       
\bibitem{ADImkeller}
 Ankirchner, S., Dereich,S. and Imkeller, P. 
\newblock{ The Shannon information of filtrations and the additional logarithmic utility of insiders},
\newblock{ Ann.  Probab.}, \textbf{34}(2):  743-778, (2006). 

\bibitem{AImkeller} S. Ankirchner, P. Imkeller
Finite utility on financial markets with asymmetric information and structure properties of the price dynamics.  Ann. Inst. Henri Poincar\'e, Probab. Stat. 41, No. 3, 479-503, (2005).

\bibitem{Becherer} Becherer, D. : The num\'eraire portfolio for unbounded semimartingales. Finance and Stochastics 5(3):327-341, (2001).


 \bibitem{ChoulliYansori1} Choulli, T., and Sina Yansori: Explicit description of all deflators for markets under random horizon with application to NFLVR,  a version is available on Arxiv, (2018), currently submitted to Finance and Stochastic. 
 
  \bibitem{ChoulliYansori2} Choulli, T., and Sina Yansori: Log-optimal portfolio without NFLVR: Existence, complete characterization and duality,  a version is available on Arxiv, (2018), and this version is currently submitted to Applied probability trust.
 
 \bibitem{ChoulliDavelooseVanmaele} Choulli, T., Daveloose, C. and Vanmaele, M. : A martingale representation theorem and valuation of defaultable securities, to appear in mathematical finance, DOI: 10.1111/mafi.12244 (2020).

\bibitem{CD1} Choulli, T., Deng, J. : No-arbitrage for informational discrete time market models, Stochastics, Volume 89, Issue 3-4, pp:628-653, (2017).

\bibitem{ChoulliDengMa} Choulli, T., Deng, J. and Ma, J.: How non-arbitrage, viability and num\'eraire portfolio are related, Finance  Stochastics  19: 719-741, (2015).

\bibitem{ChoulliMaMorlais} Choulli, T., Ma, J. and Morlais, M.:  Three Essays on Exponential Hedging with Variable Exit Times, in Inspired by Finance: The Musiela Festschrift, Y. Kabanov, M. Rutkowski, and Th. Zariphopoulou, eds., Springer, Berlin, pp. 117-158 (2011).

\bibitem{ChoulliMa} Choulli, T. and Ma, J.:  Explicit description of HARA forward utilities and their optimal portfolios, Theory Probab. Appl., Vol. 61, No. 1, pp. 57-93 (2017).

\bibitem{ChoulliStricker2005} Choulli, T. and Stricker, C. :  Minimal entropy-{H}ellinger martingale measure in incomplete markets,  Mathematical Finance,  3, 465-490, (2005).

\bibitem{ChoulliStricker2006} Choulli, T. and Stricker, C. : More on minimal entropy-Hellinger martingale measures.
 Math. Finance \textbf{16}(1), 1-19, (2006).

\bibitem{ChoulliStricker2007} Choulli, T., Stricker, C., and Li, J. : Minimal {H}ellinger martingale measures of order {$q$}. Finance Stoch. \textbf{11}(3), 399-427, (2007).

\bibitem{ChristensenLarsen2007} Christensen, M. and Larsen, K. : No Arbitrage and the Growth Optimal Portfolio, Stochastic Analysis and Applications, 25:1, 255-280, (2007).
\bibitem{Clark96} Clark, S. A. :The random utility model with an infinite choice space, Economic Theory, \textbf{7}(1), 179-189, (1996).
 
 \bibitem{Cohen80} Cohen, M. A. : Random utility systems - the infinite case, Journal of Mathematical Psychology, \textbf{22}(1), 1-23, (1980).
 
 \bibitem{JImkellerKN} Corcuera, J. M. , Imkeller, P.,  Kohatsu-Higa, A. and  Nualart, D..
Additional utility of insiders with imperfect dynamical information.  Finance Stoch. 8 (2004), No. 3, 437-450.

\bibitem{CSW} Cvitanic, J., Schachermayer, W., and Wang, H. : Utility maximization in incomplete markets with random endowment. Finance and Stochastics 5.2, 259-272, (2001).

\bibitem{DellacherieMeyer80} Dellacherie, C. and Meyer, P-A. : Th\'eorie des martingales. ChapterV to VIII. Hermann, (1980).



\bibitem{fisher1931} Fisher, I. : The impatience theory of interest. AER, (1931).


\bibitem{GollKallsen} Goll, Thomas, and Jan Kallsen. : A complete explicit solution to the log-optimal portfolio problem. The Annals of Applied Probability 13.2, 774-799, (2003).

\bibitem{GrorudPontier} Grorud, A.,  Pontier, M. : Insider trading in a continuous time market model, International Journal of Theoretical and Applied Finance, Vol. 1, No. 3, pp: 331--347, (1998).

\bibitem{Hakansson} Hakansson, N. H. (1969): Optimal investment and consumption strategies
under risk, an uncertain lifetime and insurance. Int. Econ. Rev. textbf{10}, 443--466.

\bibitem{HulleySchweizer} Hulley, H., Schweizer, M.: \newblock{M6-On minimal market models and minimal
martingale measures}. Contemporary Quantitative Finance: Essays in Honour of Eckhard Platen, 35-51 (2010).






 \bibitem{J79} Jacod, J. : Calcul stochastique et probl\`emes de martingales. Lecture Notes in Mathematics no. 714. Springer, Berlin (1979).

\bibitem{JS03} Jacod, J. and Shiryaev, A. : Limit Theorems for Stochastic Processes, 2ed edn. Springer, (2002).



\bibitem{Jeulin1980} Jeulin, T. : Semi-martingales et grossissement d'une filtration. Springer, (1980).





\bibitem{KardarasKaratzas} Karatzas, I. and Kardaras, C. : The num\'eraire portfolio in semimartingale financial models.
Finance Stoch (2007) 11: 447-493, (2007).
\bibitem {pikovskykaratzas96}  Karatzas, I., Pikovsky, I. : Anticipative portfolio optimization, Advances in Applied Probability, Vol. 28, No.4, pp: 1095-1122, (1996).

\bibitem{Karatzas} Karatzas, I. and Wang, H. : Utility maximization with discretionary stopping. SIAM J. Control Optim. \textbf{39}(1), 306--329, (2000).

\bibitem{KZ} Karatzas, Ioannis, and Zitkovic, Gordan. : Optimal consumption from investment and random endowment in incomplete semimartingale markets. The Annals of Probability 31.4, 1821-1858, (2003).


\bibitem{kohatsusulem06} Kohatsu-Higa, A., Sulem, A. : Utility maximization in an insider influenced market, Mathematical Finance, Vol. 16, No. 1, pp: 153--179, (2006).

\bibitem{KW99}Kramkov, D., Schachermayer, W. : The asymptotic elasticity of utility functions and optimal investment in incomplete markets, The Annals of Applied Probability, {\bf 9}(3), 904-950, (1999).

\bibitem{Long} Long, J.B. : The num\'eraire portfolio. J. Finance. Econ. 26, 29-69, (1990).


\bibitem{Ma} Ma, J. : Minimal Hellinger Deflators and HARA Forward Utilities with Applications: Hedging with Variable Horizon, PhD thesis, (2013). Available on  {https://doi.org/10.7939/R3CT29}{library.ualberta.ca}.


\bibitem{Mcfadden90} McFadden, D.,  Richter, M. K. : Stochastic rationality and revealed stochastic preference, Preferences, uncertainty, and optimality, essays in honor of Leo Hurwicz, 161-186, (1990).


\bibitem{merton71} Merton, R.C. : Optimum consumption and portfolio rules in a continuous-time model, Journal of Economic Theory, Elsevier, {\bf 3}(4), 373-413, (1971).

\bibitem{merton73} Merton, R.C. : Theory of rational option pricing, The Bell Journal of Economics and Management Science, 141-183, (1973).

\bibitem{MusielaZariphoupoulou}  Musiela, M. and Zariphopoulou, T.: Portfolio choice under dynamic investment performance criteria. Quant. Finance \textbf{9}(2), 161-170 (2009). 


\bibitem{Nik2005} Nikeghbali, A. and Yor, M. : A Definition and Some Characteristic Properties of Pseudo-Stopping Times, The Annals of Probability, 33, 5, 1804-1824, (2005).

\bibitem{Stricker} Stricker, Ch. : Quelques remarques sur la topology des semimartingales: Applications aux int\'egrales stochastiques, S\'eminaire de Probabilit\'es XV 1979/80, LNM 850, pp: 499-522 (1981).

\bibitem{Suppes89} Suppes, P., Krantz, D. H., Luce, R. D.,  Tversky, A. : Foundations of Measurement, vol. II. New York: Academic Press: (1989).

\bibitem{TAKAOKA} Takaoka, K., and Schweizer, M.:  \newblock{A note on the condition of no unbounded profit with bounded risk}. Finance and Stochastics 18(2), pp: 393-405 (2014).

\bibitem{Yaari1965} Yaari, M. : Uncertain Lifetime, Life Insurance, and the Theory of the Consumer. Review of Economic Studies 32(2): 137-150, (1965).

\bibitem{Yansori} Yansori, S.: Deflators, log-optimal portfolio and num\'eraire portfolio for markets under random horizon.  Ph.D. thesis, University of Alberta, (2018).

\bibitem{zitkovic09}Zitkovic, G.: A dual characterization of self-generation and log-affine forward performances. Ann. Appl. Prob. \textbf{19}(6), 2176--2270, 2009.





%
%





%
%


























 
\end{thebibliography}

\end{document}